\pgfplotsset{compat=1.16}
\newtheorem{theo}{Theorem}
\newtheorem{prop}[theo]{Proposition}
\newtheorem{lemm}[theo]{Lemma}
\newtheorem{myclaim}{Claim}
\newcommand{\ALG}{\texttt{ALG}\xspace}
\newcommand{\OPT}{\texttt{OPT}\xspace}
\newcommand{\KB}{\texttt{KB}\xspace}
\newcommand{\NF}{\texttt{NF}\xspace}
\newcommand{\FF}{\texttt{FF}\xspace}
\newcommand{\WNFD}{\texttt{WNFD}\xspace}
\newcommand{\WNFI}{\texttt{WNFI}\xspace}
\newcommand{\WFFD}{\texttt{WFFD}\xspace}
\newcommand{\WFFI}{\texttt{WFFI}\xspace}
\newcommand{\WNFDR}{\texttt{WNFD-R}\xspace}
\newcommand{\WFFDR}{\texttt{WFFD-R}\xspace}
\newcommand{\WFFIR}{\texttt{WFFI-R}\xspace}
\newcommand{\cqfd} {\mbox{}~\hfill{\lower-0.3ex\hbox{\tiny $\blacksquare$}}}
\newcommand{\Diag}{\operatorname{Diag}}
\newcommand{\R}{\mathbb{R}}
\newcommand{\nb}{\boldsymbol{n}}
\newcommand{\bb}{\boldsymbol{b}}
\newcommand{\xb}{\boldsymbol{x}}
\newcommand{\yb}{\boldsymbol{y}}
\newcommand{\zb}{\boldsymbol{z}}
\newcommand{\Lb}{\boldsymbol{L}}
\newcommand{\vb}{\boldsymbol{v}}
\newcommand{\wb}{\boldsymbol{w}}
\newcommand{\ab}{\boldsymbol{a}}
\newcommand{\ub}{\boldsymbol{u}}
\newcommand{\Vb}{\boldsymbol{V}}
\newcommand{\Db}{\boldsymbol{D}}
\newcommand{\Ub}{\boldsymbol{U}}
\newcommand{\mb}{\boldsymbol{m}}
\newcommand{\sqts}{\sqrt{37}}
\newcommand{\bB}{\overline{B}}
\newcommand{\bO}{\overline{O}}
\newcommand\scalemath[2]{\scalebox{#1}{\mbox{\ensuremath{\displaystyle #2}}}}
\begin{document}

\title{Improved Analysis of two Algorithms for Min-Weighted Sum Bin Packing}
\author{Guillaume Sagnol
\orcidID{0000-0001-6910-8907} }
\authorrunning{G. Sagnol}
\titlerunning{Improved Analysis of Two Algorithms for MWSBP}

\institute{Technische Universität Berlin, 
Institut für Mathematik\\ 
Straße des 17. Juni 136, 10623 Berlin, Germany\quad
\email{sagnol@math.tu-berlin.de}\\
}

\maketitle

\begin{abstract}
We study the Min-Weighted Sum Bin Packing problem, a variant of the classical Bin Packing problem in which items have a weight, and each item induces a cost equal to its weight multiplied by the index of the bin in which it is packed.
This is in fact equivalent to a batch scheduling problem that arises in many fields of applications such as appointment scheduling or warehouse logistics. We give improved lower and upper bounds on the approximation ratio of two simple algorithms for this problem. In particular, we show that the knapsack-batching algorithm, which iteratively solves knapsack problems over the set of remaining items to pack the maximal weight in the current bin, has an approximation ratio of at most 17/10.

\keywords{Bin Packing\!  \and \!Batch Scheduling\! \and \!Approximation Algorithms}\vspace{-0.3em}
\end{abstract}

\section{Introduction} \vspace{-0.3em}\enlargethispage{1em}
\textsc{Bin Packing} is a fundamental problem in computer science, in which a set of items must be packed into the smallest possible number of identical bins, and has applications in fields as varied as logistics, data storage or cloud computing. A property of the bin packing objective is that all bins are treated as ``equally good'', which is not  always true in applications with a temporal component. Consider, e.g., the problem of allocating a set of $n$ patients to days for a medical appointment with a physician. Each patient is characterized by a service time and a weight indicating the severity of his health condition. The total time required to examine all patients assigned to a given day should not exceed a fixed threshold. The days thus correspond to bins indexed by $1,2,\ldots$, and bins with a small index are to be favored, especially for patients with a high weight, because they yield a smaller waiting time for the patients.

In the \textsc{Min-Weighted Sum Bin Packing Problem} (MWSBP), which was formally introduced in~\cite{EpsteinL07a}, the input consists of a set of $n$ items with size $s_i\in(0,1]$ and weight $w_i>0$\footnote{In~\cite{EpsteinL07a} the weights are assumed
to be $w_i\geq 1$, but we can reduce to this case by scaling; Our
lower bounds on approximation ratios are not affected by this operation.}. The goal is to find a feasible allocation of minimum cost of the set of items to bins, i.e., a partition of $[n]:=\{1,\ldots,n\}$ into $B_1,\ldots,B_p$ such that $\sum_{i\in B_k} s_i \leq 1$ holds for all $k\in[p]$, where the cost of putting item $i$ into $B_k$ is given by $k\cdot w_i$. In other words, 
if we use the notation $x(S):=\sum_{i\in S} x_i$ for a vector $x\in\R^n$ and a subset $S\subseteq[n]$,
the cost of a feasible allocation is
\begin{equation}
 \Phi(B_1,\ldots,B_p):=\sum_{k=1}^p k\cdot w(B_k)
 =
\sum_{k=1}^p \sum_{j=k}^p w(B_j). \label{Phi0}
\end{equation}

Another interpretation of this problem is that we have a set of jobs with unit processing times, and want to schedule them on a batch processing machine capable of simultaneously processing a batch of jobs of total size at most $1$, with the objective to minimize the weighted sum of completion times. 
In the three-field notation introduced by Graham et al.~\cite{GrahamLLR79}, MWSBP is thus equivalent to $1|p\operatorname{-}batch, s_j, p_j=1|\sum w_j C_j$; we refer to~\cite{FowlerM22} for a recent review giving more background on parallel batch scheduling.
Broadly speaking, we see that MWSBP captures the main challenge of many real-world problems in which items must be packed over time, such as the appointment scheduling problem mentioned above, or the problem of 
scheduling batches of orders in an order-picking warehouse.

For a given instance of MWSBP and an algorithm \ALG, we denote by \OPT the cost of an optimal solution, and the cost of the solution returned by \ALG is denoted by \ALG as well. Recall that the approximation ratio $\mathcal{R}(\ALG)$ of an algorithm
$\ALG$ is the smallest constant $\rho\geq 1$ such that, for all instances of the problem, it holds $\ALG \leq \rho \cdot \OPT$.

\enlargethispage{1em} \vspace{-0.4em}
\paragraph{Related Work.} 
The complexity of MWSBP is well understood, as the problem is NP-hard in the strong sense and a polynomial-time approximation scheme (PTAS) exists~\cite{EpsteinL07a}.
This paper also gives a simple algorithm based on \texttt{Next-Fit} 
which has an approximation ratio of 2. Prior to this work, 
several heuristics have been proposed 
for a generalization of the problem with incompatible families of items associated with different batch times~\cite{DobsonN01},
and it was shown that a variant of \texttt{First-Fit}
has an approximation ratio of 2 as well.

The unweighted version of the problem, \texttt{Min-Sum Bin Packing} (MSBP), in which each item has weight $w_i=1$, also attracted some attention. A simpler PTAS is described in~\cite{EpsteinJL18} for this special case. The authors also analyze the \emph{asymptotic approximation ratio} of several algorithms based on \texttt{First-Fit} or \texttt{Next-Fit}, i.e., the limit of the approximation ratio when one restricts attention to instances with $\OPT\to\infty$.
In particular, they give an algorithm whose asymptotic approximation ratio is at most 1.5604.
In addition, it is known that 
the variant of \texttt{First-Fit} considering items in nondecreasing order of their sizes is a $(1+\sqrt{2}/2)$-approximation algorithm
for MSBP~\cite{LiTZ21}.

Another related problem is \textsc{Min-Weighted Sum Set Cover} (MWSSC), in which a collection of subsets $S_1,\ldots,S_m$ of a ground set $[n]$ of items is given in the input,
together with weights $w_1,\ldots,w_n>0$.
As in MWSBP, a solution is a partition $B_1, B_2,\ldots,B_p$ of $[n]$ and the cost of these batches is given by~\eqref{Phi0}, but the difference is that each batch $B_k\subseteq [n]$  must be the subset of some $S_j$, $j\in[m]$. In other words, the difference with MWSBP is that the feasible batches are described \emph{explicitly} by means of a collection of maximal batches rather than \emph{implicitly} using a knapsack constraint $s(B_k) \leq 1$. 
Any instance of MWSBP can thus be cast as an instance of MWSSC,
although this requires an input of exponential size
(enumeration of all maximal subsets of items of total size at most $1$).
The unweighted version of MWSSC was introduced in~\cite{FeigeLT04}. The authors show that 
a natural greedy algorithm is a 4-approximation algorithm,
and that this performance guarantee is the best possible unless $P=NP$.

\vspace{-0.4em}
\paragraph{Contribution and Outline.} 
Given the practical relevance of MWSBP for real-world applications, we feel that it is important to understand the performance of simple algorithms for this problem,
even though a PTAS exists. Indeed, the PTAS
of~\cite{EpsteinL07a} has an astronomical running time, which prevents its usage in applications.
This paper gives improved lower and upper bounds on the approximation ratio of two simple algorithms for MWSBP; In particular, we obtain the first \emph{simple algorithm}
with an approximation algorithm strictly below 2,
see Table~\ref{tab} for a summary of previous and new results.

The two analyzed algorithms, called \texttt{Knapsack-Batching} (\KB) and \texttt{Weighted First-Fit Increasing Reordered}~(\WFFIR), are introduced in Section~\ref{sec:prelim} alongside with more background on algorithms for \textsc{Bin Packing} and MWSBP. In Section~\ref{sec:KB} we show that $\mathcal{R}(\KB)\in (1.433,1.7]$ and in Section~\ref{sec:WFFIR} we show that $\mathcal{R}(\WFFIR)>1.557$. Further, all the best known lower bounds
are achieved for instances in which $w_i=s_i$ for all items, a situation reminiscent of scheduling problems where we minimize the weighted sum of completion times, and where the worst instances have jobs with equal \emph{smith ratios}; see, e.g.~\cite{schwiegelshohn2011alternative}. It is therefore natural to focus on this regime, and we prove that $\WFFIR$ is a $\nicefrac{(7+\sqrt{37})}{8}$-approximation algorithm on instances with $w_i=s_i$ for all $i\in I$.

\begin{table}[t]\label{tab}
 \caption{Summary of previous and new bounds on the approximation ratio of simple algorithms for MWSBP. The results of this paper and indicated in boldface, together with the reference of the proposition (Pr.) or theorem (Th.) where they are proved. \vspace{-0.6em}}
\resizebox{\textwidth}{!}{%
\begin{tabular}{l @{\hskip 1em} rrrr @{\hskip 1em} rr @{\hskip 1em} l}
\hline
 Algorithm & \multicolumn{4}{c@{\hskip 1em}}{lower bounds} & \multicolumn{2}{c@{\hskip 1em}}{upper bounds} & upper bounds for special\\
           & \multicolumn{2}{c}{previous} & \multicolumn{2}{c}{new} & previous & new\quad ~ & ~\quad cases of MWSBP\\ 
 \hline
 \WNFI  & 2&\cite{DobsonN01} & & & 2~\cite{DobsonN01}  & & 1.618 for $w_i=1, \OPT\to\infty$~\cite{EpsteinJL18}\\
 \WFFI  & 2&\cite{DobsonN01} & & & 2~\cite{DobsonN01}  & & 1.707 for $w_i=1$~\cite{LiTZ21}\\
 \WNFDR & 2&\cite{EpsteinL07a} & & & 2~\cite{EpsteinL07a}  & & \\
 \WFFIR & 1.354&\cite{EpsteinJL18} & \textbf{1.557} & (Pr.~\ref{prop:WFFI_LB}) & 2~\cite{DobsonN01} &  & \textbf{1.636} for $w_i=s_i$ (Th.~\ref{theo:UB_WFFIR})\\
 \KB    & 1.354&\cite{EpsteinJL18} & \textbf{1.433} & (Pr.~\ref{prob:LB_GB}) & 4~\cite{FeigeLT04} & \textbf{1.7} (Th.~\ref{theo17})\\
 \hline
\end{tabular}%
}\vspace{-0.5em}
\end{table}

\vspace{-0.7em}\enlargethispage{1em}

\section{Preliminaries}~\label{sec:prelim} \vspace{-1.6em}

Many well known heuristics have been proposed for the \textsc{Bin Packing} problem. For example,
\texttt{Next-Fit} (\NF) and \texttt{First-Fit} (\FF) consider the list of items in an arbitrary given order,
and assign them sequentially to bins. The two algorithms differ
in their approach to select the bin where the current item is placed. To this end, \NF keeps track of a unique opened bin; the current item is placed in this bin if it fits into it, otherwise another bin is opened and the item is placed into it.
In contrast, \FF first checks whether the current item fits in any of the bins used so far. If it does, the item is put into the bin where it fits that has been opened first, and otherwise a new bin is opened.
It is well known that \FF packs all items in at most $\lfloor \nicefrac{17}{10} \cdot OPT \rfloor$ bins,
where $OPT$ denotes the minimal number of required bins, a result obtained by 
D{\'{o}}sa and Sgall~\cite{DosaS13} after a series of papers that improved an additive term in the performance guarantee.

In MWSBP, items not only have a size but also a weight. It is thus natural to consider the weighted variants \WFFI, \WFFD, \WNFI, \WNFI of \FF and \NF, respectively,
where \texttt{W} stands for \emph{weighted}, and the letter \texttt{I} (resp.\ \texttt{D}) stands for \emph{increasing} 
(resp.\ \emph{decreasing}) and indicates that the items are considered in nondecreasing (resp.\ nonincreasing) order of the ratio $s_i/w_i$. 
Using a simple exchange argument, we can see that every algorithm can be enhanced by reordering the bins it outputs by nonincreasing order of their weights.
We denote the resulting algorithms by adding the suffix \texttt{-R} to their acronym.
While \WNFD, \WFFD and \WFFDR do not have a bounded approximation ratio, even for items of unit weights~\cite{EpsteinJL18}, it is shown in~\cite{DobsonN01}
that the approximation ratio of \WNFI and \WFFI is exactly 2,
and the same holds for \WNFDR~\cite{EpsteinL07a}.
This paper gives improved bounds for \WFFIR in Section~\ref{sec:WFFIR}.

\medskip
Another natural algorithm for MWSBP is the 
\texttt{Knapsack-Batching} (\KB) algorithm, which was introduced in~\cite{DobsonN01} and  can be described as follows:
At iteration $k=1,2,\ldots$, solve a knapsack problem to find the subset of remaining items 
$B_k\subseteq [n]\setminus (B_1 \cup \ldots \cup B_{k-1})$
of maximum weight that fits into a bin (i.e., $s(B_k)\leq 1$).
In fact, \cite{EpsteinL07a} argues that \KB is the direct translation of the greedy algorithm for the \textsc{Min-Sum Set Cover} Problem mentioned in the introduction, so its approximation ratio is at most $4$. In practice one can use a \emph{fully polynomial-time approximation scheme} (FPTAS) to obtain a near-optimal solution of the knapsack problem in polynomial-time at each iteration, which yields a $(4+\varepsilon)$-approximation algorithm for MWSBP. In the next section, we show an improved upper bound of 1.7 for the \KB algorithm (or $1.7+\varepsilon$ if an FPTAS is used for solving the knapsack problems in polynomial-time).

\section{The Knapsack-Batching algorithm}\label{sec:KB}
\vspace{-0.4em}

In this section, we study the \texttt{Knapsack-Batching} (\KB) algorithm.
Throughout this section, we denote by $B_1,\ldots,B_p$ the set of bins returned by \KB, and by $O_1,\ldots,O_q$ the optimal bins, for an arbitrary instance of MWSBP. For notational convenience, we also define $B_{p'}=\emptyset$ for all $p'\geq p+1$ and 
$O_{q'}=\emptyset$ for all $q'\geq q+1$.
Recall that \KB solves a knapsack 
over the remaining subset of items at each iteration, so that for all $k$, $w(B_k)\geq w(B)$ holds for all $B\subseteq [n]\setminus (B_1 \cup \ldots \cup B_{k-1})$ such that $s(B)\leq 1$.

We first prove the following proposition, which shows that a performance guarantee of $\alpha$ can be proved if we show that \KB packs at least as much weight in the first $\alpha k$ bins as \OPT does in only $k$ bins. The proof relies on expressing \OPT and \KB as the area below a certain curve. Then, we show that shrinking the curve corresponding to \KB by a factor $\alpha$ yields an underestimator for the \OPT-curve; see Figure~\ref{fig:shrink}. A similar idea was used in~\cite{FeigeLT04} to bound the approximation ratio of the greedy algorithm for \textsc{Min-Sum Set Cover}, but their bound of $4$ results from shrinking the curve by a factor $2$ along both the $x$-axis and the $y$-axis.

\begin{prop}\label{prop:area}
 Let $\alpha\geq 1$. If for all $k\in[q]$ it holds
 \[ w(B_1) + w(B_2) + \ldots + w(B_{\lfloor \alpha k\rfloor})\geq 
 w(O_1) + w(O_2) + \ldots + w(O_k),\]
 then $\KB \leq \alpha \cdot \OPT$.
 \end{prop}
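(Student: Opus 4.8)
The plan is to encode both costs as integrals of a ``bin-index'' function over the weight axis, so that the hypothesis becomes a single pointwise comparison of two such functions. Write $W:=w([n])$ for the total weight of the items; since both $B_1,\ldots,B_p$ and $O_1,\ldots,O_q$ partition $[n]$, we have $\sum_{j} w(B_j)=\sum_j w(O_j)=W$. Let $P_m:=\sum_{j=1}^m w(B_j)$ and $Q_m:=\sum_{j=1}^m w(O_j)$ be the cumulative weights packed in the first $m$ bins of \KB and \OPT respectively (with $P_m=Q_m=W$ once $m$ is large, thanks to the convention $B_{p'}=O_{q'}=\emptyset$). I then define the nondecreasing step functions $c_{\KB},c_{\OPT}\colon(0,W]\to\{1,2,\ldots\}$ by $c_{\KB}(y)=k$ whenever $P_{k-1}<y\le P_k$, and analogously $c_{\OPT}$ using the $Q_m$; intuitively $c_{\KB}(y)$ is the index of the bin into which the ``$y$-th unit of weight'' is placed.

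The first step is to record the two identities
\[ \KB=\int_0^W c_{\KB}(y)\,\mathrm{d}y \qquad\text{and}\qquad \OPT=\int_0^W c_{\OPT}(y)\,\mathrm{d}y. \]
These follow directly from the representation $\Phi=\sum_k k\cdot w(B_k)$ in~\eqref{Phi0}, since $\int_{P_{k-1}}^{P_k} c_{\KB}(y)\,\mathrm{d}y = k\,(P_k-P_{k-1})=k\,w(B_k)$, and summing over $k$ recovers the cost; the same computation applies to \OPT. Geometrically, \KB and \OPT are the areas below the graphs of $c_{\KB}$ and $c_{\OPT}$, plotted with the cumulative weight on the horizontal axis and the bin index on the vertical axis.

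The heart of the argument is to turn the hypothesis into the pointwise bound $c_{\KB}(y)\le \alpha\,c_{\OPT}(y)$ for every $y\in(0,W]$. Fix such a $y$ and set $k:=c_{\OPT}(y)$, which lies in $[q]$ because $y\in(0,W]=(Q_0,Q_q]$; by definition of $c_{\OPT}$ this gives $y\le Q_k$. The assumption $w(B_1)+\cdots+w(B_{\lfloor\alpha k\rfloor})\ge w(O_1)+\cdots+w(O_k)$ reads $P_{\lfloor\alpha k\rfloor}\ge Q_k\ge y$, and by definition of $c_{\KB}$ this yields $c_{\KB}(y)\le\lfloor\alpha k\rfloor\le\alpha k=\alpha\,c_{\OPT}(y)$. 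Integrating this inequality over $(0,W]$ and invoking the two identities above gives $\KB\le\alpha\,\OPT$, as claimed. In the geometric picture this is precisely the statement that scaling the \KB-curve by the factor $1/\alpha$ along the vertical (bin-index) axis pushes it below the \OPT-curve.

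I expect the only delicate points to be the bookkeeping around the floor function and the degenerate cases, rather than any genuine difficulty. One must check that $k=c_{\OPT}(y)$ always lands in $[q]$ (as above), and that $P_{\lfloor\alpha k\rfloor}\ge Q_k$ stays meaningful when $\lfloor\alpha k\rfloor\ge p$, where it reduces to $W\ge Q_k$ via the empty-bin convention. None of this is substantial, so the real content is simply the observation that a pointwise comparison of the two bin-index curves suffices here, in contrast to the area-doubling argument of~\cite{FeigeLT04} which shrinks along both axes and thereby loses a factor $2$ twice.
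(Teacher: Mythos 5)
Your proof is correct and is essentially the paper's own argument with the two axes transposed: the paper writes \KB and \OPT as areas under the curves $x\mapsto f^A(x)$, $x\mapsto f^O(x)$ (bin index on the horizontal axis, remaining weight on the vertical axis) and shrinks the \KB-curve horizontally by $\alpha$ via a change of variables, whereas you integrate the inverse curves $y\mapsto c_{\KB}(y)$, $y\mapsto c_{\OPT}(y)$ and obtain the same area comparison from the pointwise bound $c_{\KB}(y)\le\alpha\,c_{\OPT}(y)$. The only (cosmetic) gain of your formulation is that the hypothesis applies directly at $k=c_{\OPT}(y)$, so you do not need the paper's separate step extending $f^A(\alpha k)\le f^O(k)$ from integers $k$ to all reals via monotonicity.
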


 \begin{proof}
For all $x\geq 0$, let $f^O(x):=\sum_{j=\lfloor x \rfloor+1}^\infty\,  w(O_j)$. 
Note that $f^O$ is piecewise constant
and satisfies $f^O(x)=\sum_{j=k}^q w(O_j)$ for all $x\in[k-1,k), k\in[q]$ and $f^O(x)=0$ for all $x\geq q$.  
As a result, using the second expression in~\eqref{Phi0},
we can express \OPT as the area below the curve of $f^O(x)$:
\[
 \OPT = \sum_{k=1}^q \sum_{j=k}^q w(O_j) = \int_{0}^\infty\ f^O(x)\, dx.
\]
Similarly, we have $\KB=\int_{0}^\infty\ f^A(x)\, dx$, where 
for all $x\geq 0$ we define
$f^A(x):=\sum_{j=\lfloor x \rfloor+1}^\infty w(B_j)$.
The assumption of the lemma can be rewritten as $f^A(\alpha \cdot k)\leq f^O(k)$, for all $k\in [q]$, and we note that this inequality also holds for $k=0$, as
$f^A(0)=f^O(0)=\sum_{i=1}^n w_i$. 

Now, we argue that this implies $f^A(\alpha\cdot x)\leq f^O(x)$, for all $x \geq 0$.
If $x$ lies in an interval of the form $x\in[k,k+1)$ for some $k\in\{0,\ldots,q-1\}$, then we have
$f^O(x)=f^O(k)\geq f^A(\alpha \cdot k)\geq f^A(\alpha \cdot x),$
where the last inequality follows from $k\leq x$ and the fact that $f^A$ is nonincreasing. Otherwise, we have $x\geq q$, so it holds
$f^O(x) = 0 = f^O(q) \geq f^A(\alpha \cdot q) \geq  f^A(\alpha \cdot x)$;
see Figure~\ref{fig:shrink} for an illustration.
We can now conclude this proof:\vspace{-0.1em}
\[
\KB = \int_{0}^\infty\ f^A(x)\, dx 
    = \alpha \cdot \int_{0}^\infty\ f^A(\alpha \cdot y)\, dy
    \leq \alpha \cdot \int_{0}^\infty\ f^O(y)\, dy
    = \alpha \cdot \OPT. \vspace{-1.2em}
\]
\cqfd\vspace{-0.1em}
\end{proof}

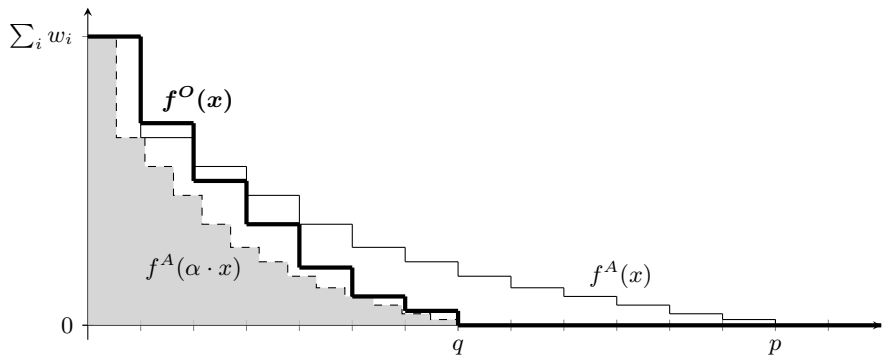
\begin{figure}[t]
\centering
\begin{tikzpicture}
\tikzmath{
\o1 = 100;
\o2 = 70;
\o3 = 50;
\o4 = 35;
\o5 = 20;
\o6 = 10;
\o7 = 5;
\o8 = 0;
\a1 = 100;
\a2 = 65;
\a3 = 55;
\a4 = 45;
\a5 = 35;
\a6 = 27;
\a7 = 22;
\a8 = 17;
\a9 = 13;
\aa0 = 10;
\aa1 = 7;
\aa2 = 4;
\aa3 = 2;
\aa4 = 0;
\al = 0.54;
\x0 = \al*0;
\x1 = \al*1;
\x2 = \al*2;
\x3 = \al*3;
\x4 = \al*4;
\x5 = \al*5;
\x6 = \al*6;
\x7 = \al*7;
\x8 = \al*8;
\x9 = \al*9;
\xx0 = \al*10;
\xx1 = \al*11;
\xx2 = \al*12;
\xx3 = \al*13;
\xx4 = \al*14;
}
\begin{axis}[
axis x line=middle,
axis y line=middle,
xtick={1,...,14},
xticklabels={},
extra x ticks={7,13},
extra x tick labels={$q$,$p$},
xlabel near ticks,
ytick={0.1,100},
yticklabels={$0$,$\sum_i w_i$},
xmax=15,
ymax=110,
xmin=-0.1,
ymin=-5,
width=\textwidth,
height=6cm
]
\tikzstyle{mystyle}=[dashed]
\addplot[domain=0:1] {\a1};
\addplot[domain=1:2] {\a2};
\addplot[domain=2:3] {\a3};
\addplot[domain=3:4] {\a4};
\addplot[domain=4:5] {\a5};
\addplot[domain=5:6] {\a6};
\addplot[domain=6:7] {\a7};
\addplot[domain=7:8] {\a8};
\addplot[domain=8:9] {\a9};
\addplot[domain=9:10] {\aa0} node[above,pos=1] {~$f^A(x)$} ;
\addplot[domain=10:11] {\aa1};
\addplot[domain=11:12] {\aa2};
\addplot[domain=12:13] {\aa3};
\addplot[domain=13:15] {\aa4};
\draw (axis cs:1,\a2) -- (axis cs:1, \a1);
\draw (axis cs:2,\a3) -- (axis cs:2, \a2);
\draw (axis cs:3,\a4) -- (axis cs:3, \a3);
\draw (axis cs:4,\a5) -- (axis cs:4, \a4);
\draw (axis cs:5,\a6) -- (axis cs:5, \a5);
\draw (axis cs:6,\a7) -- (axis cs:6, \a6);
\draw (axis cs:7,\a8) -- (axis cs:7, \a7);
\draw (axis cs:8,\a9) -- (axis cs:8, \a8);
\draw (axis cs:9,\aa0) -- (axis cs:9, \a9);
\draw (axis cs:10,\aa1) -- (axis cs:10, \aa0);
\draw (axis cs:11,\aa2) -- (axis cs:11, \aa1);
\draw (axis cs:12,\aa3) -- (axis cs:12, \aa2);
\draw (axis cs:13,\aa4) -- (axis cs:13, \aa3);
\draw[mystyle, name path=f1] (axis cs:0, \a1) -- (axis cs:\x1, \a1) -- (axis cs:\x1,\a2) -- (axis cs:\x2, \a2) -- (axis cs:\x2,\a3) -- (axis cs:\x3, \a3) -- (axis cs:\x3,\a4) -- (axis cs:\x4, \a4) -- (axis cs:\x4,\a5) -- (axis cs:\x5, \a5) -- (axis cs:\x5,\a6) -- (axis cs:\x6, \a6) -- (axis cs:\x6,\a7) -- (axis cs:\x7, \a7) -- (axis cs:\x7,\a8) -- (axis cs:\x8, \a8)-- (axis cs:\x8,\a9) -- (axis cs:\x9, \a9) -- (axis cs:\x9,\aa0) -- (axis cs:\xx0, \aa0)-- (axis cs:\xx0,\aa1) -- (axis cs:\xx1, \aa1) -- (axis cs:\xx1,\aa2) -- (axis cs:\xx2, \aa2) -- (axis cs:\xx2,\aa3) -- (axis cs:\xx3, \aa3) --(axis cs:\xx3,\aa4);
\path [name path=xaxis]  (axis cs: 13,0) -- (axis cs: 0,0);
 \addplot[black!20, opacity=0.8] fill between [of=xaxis and f1];
\node at (axis cs: 2,20) {$f^A(\alpha \cdot x)$};
\addplot[ultra thick,domain=0:1] {\o1};
\addplot[ultra thick,domain=1:2] {\o2} node[above,pos=1] {~$\boldsymbol{f^O(x)}$};
\addplot[ultra thick,domain=2:3] {\o3};
\addplot[ultra thick,domain=3:4] {\o4};
\addplot[ultra thick,domain=4:5] {\o5};
\addplot[ultra thick,domain=5:6] {\o6};
\addplot[ultra thick,domain=6:7] {\o7};
\addplot[ultra thick,domain=7:15] {\o8};
\draw[ultra thick] (axis cs:1,\o2) -- (axis cs:1, \o1);
\draw[ultra thick] (axis cs:2,\o3) -- (axis cs:2, \o2);
\draw[ultra thick] (axis cs:3,\o4) -- (axis cs:3, \o3);
\draw[ultra thick] (axis cs:4,\o5) -- (axis cs:4, \o4);
\draw[ultra thick] (axis cs:5,\o6) -- (axis cs:5, \o5);
\draw[ultra thick] (axis cs:6,\o7) -- (axis cs:6, \o6);
\draw[ultra thick] (axis cs:7,\o8) -- (axis cs:7, \o7);
\end{axis}
\end{tikzpicture}%
\vspace{-1em}
\caption{\small Illustration of the  proof of Proposition~\ref{prop:area}. The area below the thick curve of $x\mapsto f^O(x)$ is \OPT, and the area under the thin curve of $x\mapsto f^A(x)$ is $\KB$. Shrinking this area horizontally by a factor $\alpha$ produces the shaded area under $x\mapsto f^A(\alpha\cdot x)$, which must be smaller than \OPT. \label{fig:shrink}\vspace{-1em}}
\end{figure}

We next prove that \KB satisfies the assumption of Proposition~\ref{prop:area} for $\alpha=\frac{17}{10}$. This result is of independent interest, as it implies that \KB is also a \nicefrac{17}{10}-approximation algorithm for the problem of finding the smallest number of bins needed to pack a given weight. \vspace{-0.2em}\enlargethispage{1em}

\begin{prop}\label{prop_17}
 For all instances of MWSBP in which the bins $O_1,\ldots, O_q$ are optimal and $\KB$ outputs the bins $B_1,\ldots,B_p$, and for all $k\in[q]$ it holds
 \[ w(B_1) + w(B_2) + \ldots + w(B_{\lfloor \frac{17}{10} k\rfloor})\geq 
 w(O_1) + w(O_2) + \ldots + w(O_k). \vspace{-0.1em}\]
\end{prop}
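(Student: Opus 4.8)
The plan is to combine the area argument of Proposition~\ref{prop:area} (already available) with the First-Fit packing bound of D\'osa and Sgall~\cite{DosaS13}. By Proposition~\ref{prop:area} it suffices to prove the stated prefix inequality, so I fix $k\in[q]$, write $m:=\lfloor\tfrac{17}{10}k\rfloor$, and set $W:=w(O_1)+\cdots+w(O_k)$. Without loss of generality I reorder the optimal bins so that $w(O_1)\ge\cdots\ge w(O_q)$; I also record that the bins produced by \KB{} are nonincreasing in weight, since at step $k+1$ the set $B_{k+1}$ was already a feasible candidate at step $k$, whence $w(B_k)\ge w(B_{k+1})$.

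The structural observation driving everything is that each \KB{} bin is a \emph{maximum}-weight, hence inclusion-maximal, subset of the items still available: if an unpacked item $i$ satisfied $s(B_j)+s_i\le 1$, then $B_j\cup\{i\}$ would beat $B_j$, a contradiction. Since any item packed in a later bin was available when $B_j$ was formed, this yields the First-Fit incompatibility property: for every $j<j'$ and every $i\in B_{j'}$ one has $s(B_j)+s_i>1$. This is precisely the property on which the $\tfrac{17}{10}$ analysis of \FF{} rests; in particular it already forces at most one \KB{} bin to be at most half full in size, and more generally it should let me import the size-class bookkeeping of~\cite{DosaS13}.

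I then argue by contradiction: suppose $W_m^B:=w(B_1)+\cdots+w(B_m)<W$. Writing $S:=O_1\cup\cdots\cup O_k$ and letting $S'$ be the items of $S$ that \KB{} has not yet packed after its first $m$ bins, a direct count gives $w(S')\ge W-W_m^B>0$, so $S'$ is nonempty, has weight at least $W-W_m^B$, and fits into $k$ bins (it is a subset of the optimal bins $O_1,\ldots,O_k$); note that if $S'$ were empty we would immediately get $W_m^B\ge w(S)=W$. By the incompatibility property, no item of $S'$ fits into any of $B_1,\ldots,B_m$, so these $m$ bins are ``full enough'' relative to the leftover items. The aim is to turn this into the contradiction $m<\lfloor\tfrac{17}{10}k\rfloor$, i.e.\ to show that $m$ nearly-full bins whose overflow still fits into $k$ bins cannot be too numerous, which is exactly the quantitative content of the \FF{} bound of~\cite{DosaS13} applied to the packing $B_1,\ldots,B_m$ together with the leftover set $S'$.

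The main obstacle is that the \FF{} bound counts \emph{bins} through item \emph{sizes}, whereas the quantity I must control is a \emph{weight}, and the two are decoupled because \KB{} optimizes weight rather than size. I expect to resolve this through a dichotomy on the first $m$ bins: either they already carry large weight through items outside $S$ (which \KB{} selects only when they beat the alternatives, directly contradicting $W_m^B<W$), or they are dominated by items of $S$, in which case restricting the D\'osa--Sgall size-class charging to these bins bounds $m$ in terms of $k$ and contradicts $m=\lfloor\tfrac{17}{10}k\rfloor$. Carefully combining these two regimes --- while tracking the weight-monotonicity of the $B_j$ and handling the floor together with the small-$k$ boundary cases --- is where the real work lies.
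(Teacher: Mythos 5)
You have correctly isolated the two ingredients that the paper's proof also uses --- the inclusion-maximality of each \KB bin (no remaining item fits, hence the First-Fit-style incompatibility property) and the D\'osa--Sgall bound $\FF\leq\lfloor\frac{17}{10}\,OPT\rfloor$ --- but the step you yourself flag as ``where the real work lies'' is exactly the step that is missing, and the dichotomy you sketch does not close it. The difficulty is real: $B_1,\ldots,B_m$ together with the leftover set $S'$ is not a First-Fit packing of any item set whose optimal bin number you control (the $B_j$ may be stuffed with items outside $O_1\cup\cdots\cup O_k$), so neither the D\'osa--Sgall theorem as a black box nor its internal size-class charging applies to that configuration. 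Moreover, ``bins dominated by items of $S$'' versus ``bins carrying large weight from outside $S$'' is not a clean dichotomy --- a bin can have moderate weight from both sources --- and the first branch does not ``directly contradict $W_m^B<W$'' without a quantitative argument relating the outside weight to $W$. As written, the proof does not go through.

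The paper's proof avoids the contradiction setup entirely and resolves the size/weight decoupling by a bin-by-bin comparison. Set $\bB:=B_1\cup\cdots\cup B_{\lfloor 1.7k\rfloor}$ and $\bO:=O_1\cup\cdots\cup O_k$, and run plain First-Fit on the items of $\bO$ \emph{in a carefully chosen order}: first the items of $\bO\cap\bB$ in nondecreasing order of the index $\beta(i)$ of the \KB bin containing them, then the rest. This produces bins $H_1,\ldots,H_{q'}$ with two properties: (i) $q'\leq\lfloor 1.7k\rfloor$ by D\'osa--Sgall, because $\bO$ packs into $k$ bins --- this is the only place the $17/10$ enters, applied to the optimal items rather than to the \KB bins; and (ii) every item of $H_j$ is still unpacked when \KB solves its $j$-th knapsack (items of $\bO\cap\bB$ land in an $H$-bin of index $\beta'(i)\leq\beta(i)$, items of $\bO\setminus\bB$ trivially remain), so $H_j$ is a feasible candidate and $w(H_j)\leq w(B_j)$ by \KB's greedy choice. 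Summing over $j\leq\lfloor 1.7k\rfloor$ gives $w(\bO)\leq w(\bB)$ directly. If you want to salvage your approach, this ordered auxiliary First-Fit packing is the missing idea you need.
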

\begin{proof}
 Let $k\in [q]$. We define the sets
 $\bB:=B_1\cup\ldots\cup B_{\lfloor 1.7k\rfloor}$,
 $\bO:=O_1\cup\ldots\cup O_{k}$. For all $i\in\bB$, denote by $\beta(i)\in[1.7\,k]$ the index of the \KB--bin that contains $i$, i.e., $i\in B_{\beta(i)}$.
 Now, we order the items in $\bO$ in such a way that 
we first consider the items of $\bO\cap\bB$ in nondecreasing order of $\beta(i)$, and then all remaining items in $\bO\setminus\bB$ in an arbitrary order. Now, we construct a new packing $H_1,\ldots,H_{q'}$ of the items in $\bO$, by applying \texttt{First-Fit} to the list of items in $\bO$, considered in the aforementioned order.
For all $i\in \bO\cap\bB$, let $\beta'(i)$ denote the index such that $i\in H_{\beta'(i)}$. Clearly, our order on $\bO$ implies $\beta'(i)\leq \beta(i)$ for all $i\in \bO\cap\bB$.

It follows from \cite{DosaS13} that $q'\leq \lfloor 1.7\cdot k \rfloor$. So we define $H_j:=\emptyset$ for $j=q'+1,$ $q'+2,\ldots,\lfloor 1.7\cdot k \rfloor$, and it holds $\bO = H_1\cup\ldots\cup H_{\lfloor 1.7 k\rfloor}$. Now, we will show that $w(H_j)\leq w(B_j)$ holds for all $j=1,\ldots,\lfloor 1.7k \rfloor$.
To this end, using the greedy property of \KB, it suffices to show that all elements of $H_j$ 
remain when the knapsack problem of the $j$th iteration is solved, i.e.,
$H_j \subseteq [n] \setminus (B_1 \cup \ldots \cup B_{j-1})$. So, let $i\in H_j$. If $i\notin\bB$, then $i\notin (B_1 \cup \ldots \cup B_{j-1})$ is trivial. Otherwise, it is $i\in\bO\cap\bB$, so we have $j=\beta'(i)\leq \beta(i)$, which implies that $i$ does not belong to any $B_\ell$ with $\ell < j$.
We can now conclude the proof:\vspace{-0.2em}
\[
 w(\bO) = \sum_{i=1}^{\lfloor 1.7\cdot k\rfloor} w(H_i)
            \leq \sum_{i=1}^{\lfloor 1.7\cdot k\rfloor} w(B_i)
            = w(\bB).\vspace{-1.8em}
\]
\cqfd\vspace{-0.2em}
\end{proof}

\enlargethispage{1em}
Proposition~\ref{prop:area} and Proposition~\ref{prop_17} yield the main result of this section:
\begin{theo}\label{theo17}
 $\mathcal{R}(\KB) \leq \frac{17}{10}$.
\end{theo}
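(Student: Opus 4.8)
The theorem $\mathcal{R}(\KB) \leq \frac{17}{10}$ follows immediately by combining the two propositions already proved, so the plan is essentially to chain them together. The approach is to apply Proposition~\ref{prop:area} with the specific value $\alpha = \frac{17}{10}$, whose hypothesis is exactly the conclusion of Proposition~\ref{prop_17}.

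Concretely, I would first fix an arbitrary instance of MWSBP, let $O_1,\ldots,O_q$ denote its optimal bins and $B_1,\ldots,B_p$ the bins returned by \KB. Proposition~\ref{prop_17} establishes that for every $k\in[q]$,
\[
 w(B_1) + \ldots + w(B_{\lfloor \frac{17}{10} k\rfloor})\geq
 w(O_1) + \ldots + w(O_k),
\]
which is precisely the hypothesis required by Proposition~\ref{prop:area} in the case $\alpha = \frac{17}{10}$ (note $\alpha\geq 1$ holds). Invoking Proposition~\ref{prop:area} then yields $\KB \leq \frac{17}{10}\cdot\OPT$ for this instance. Since the instance was arbitrary, the definition of the approximation ratio $\mathcal{R}(\KB)$ as the smallest $\rho$ with $\KB\leq\rho\cdot\OPT$ on all instances gives $\mathcal{R}(\KB)\leq\frac{17}{10}$, as claimed.

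There is no real obstacle here: the genuine work has already been carried out in the two preparatory propositions. The only point requiring a moment of care is checking that the two statements align syntactically — that Proposition~\ref{prop_17} delivers the inequality for \emph{all} $k\in[q]$ and not merely a single $k$, and that the floor $\lfloor\frac{17}{10}k\rfloor$ appearing there matches the $\lfloor\alpha k\rfloor$ in the hypothesis of Proposition~\ref{prop:area}. Both conditions are met verbatim, so the proof is a single sentence: the claim is an immediate consequence of Propositions~\ref{prop:area} and~\ref{prop_17} applied with $\alpha=\frac{17}{10}$.
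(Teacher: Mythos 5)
Your proof is correct and is exactly the paper's argument: the theorem is stated as an immediate consequence of Proposition~\ref{prop:area} with $\alpha=\frac{17}{10}$ and Proposition~\ref{prop_17}. Your check that the hypotheses align (the inequality holding for all $k\in[q]$ and the matching floor expression) is the only verification needed, and you have done it.
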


\noindent\textbf{Remark}\ \emph{It is straightforward to adapt the above proof to show that if we use an FPTAS for obtaining a ($1+\frac{\varepsilon}{n}$)-approximation of the knapsack problems in place of an optimal knapsack at each iteration of the \KB algorithm, we obtain a polynomial-time $(\frac{17}{10}+\varepsilon)$-approximation algorithm for MWSBP.}

\medskip
We next show a lower bound on the approximation ratio of \KB.
For some integers $s$ and $k$ with $s\geq 2^k$, let $\epsilon=\frac{1}{2^k\cdot s}$. Given an integer vector $\nb\in\mathbb{Z}_{\geq 0}^{k+1}$, we construct the following instance:
The items are partitioned in $k+1$ classes, i.e., $[n]=C_1\cup C_2\cup \ldots \cup C_{k+1}$. For all $j\in[k]$, the class $C_j$ consists of $N_j=n_1+\ldots+n_{k+2-j}$ items of class $j$, with $s_i=w_i=\frac{1}{2^j}+\epsilon$, $\forall i\in C_j$. In addition,
the class $C_{k+1}$ contains $N_{k+1}:=n_1 \cdot (s-k)$ tiny items with $s_i=w_i=\epsilon$. We further assume that 
$m_1:=n_1\cdot(2^{-k}-k\epsilon)=\frac{n_1(s-k)}{2^k \cdot s}$ is
an integer, and
for all $j\in[k]$, $\frac{N_j}{2^j-1}$ is an integer. Then, for $j=2,\ldots,k+1$ we let $m_j:=\frac{N_{k+2-j}}{2^{k+2-j}-1}\in\mathbb{Z}$.

\medskip
On this instance, \KB could start by packing $m_1$ bins of total weight $1$, containing $2^k \cdot s$ tiny items each. After
this, there only remains items from the classes $C_1,\ldots,C_k$,
and the solution of the knapsack problem is to put $2^{k}-1$ items of the class $C_k$ in a bin. Therefore, \KB adds $m_2=N_{k}/(2^k-1)$ such bins of weight $(2^k-1)\cdot(2^{-k}+\epsilon)=1-2^{-k}+(2^k-1)\cdot\epsilon$ into the solution. Continuing this reasoning, we see that for each $j=2,\ldots,k+1$, when there only remains items from the classes $C_1,\ldots,C_{k+2-j}$, \KB adds a group of $m_j$ bins that contain $(2^{k+2-j}-1)$ items of class $C_j$, with a weight of $1-2^{-(k+2-j)}+(2^{k+2-j}-1)\cdot\epsilon$ each.
This gives:
\begin{equation}\label{GBi}
 \KB = \sum_{i=1}^{m_1} i + \sum_{j=2}^{k+1} 
 \left(1-2^{-(k+2-j)}+(2^{k+2-j}-1)\cdot \epsilon\right) \cdot \sum_{i=m_1+\ldots+m_{j-1}+1}^{m_1+\ldots+m_{j}} i.
\end{equation}

On the other hand, we can construct the following packing for this instance:
The first $n_1$ bins contain one item of each class $C_1,\ldots,C_k$, plus $(s-k)$ tiny items; their total weight is thus
$\sum_{j=1}^k (2^{-j}+\epsilon) + (s-k)\epsilon = 1-2^{-k} + s\cdot\epsilon=1.$
Then, for each $j=2,\ldots,k+1$, we add a group of $n_j$ bins, each containing one item of each class $C_1,C_2,\ldots,C_{k+2-j}$. The bins in the $j$th group thus contain a weight of $\sum_{i=1}^{k+2-j} (2^{-i}+\epsilon)=1-2^{-(k+2-j)}+(k+2-j)\cdot \epsilon$.
Obviously, the cost of this packing is an upper bound for the optimum:
\begin{equation}\label{OPTi}
 \OPT \leq \sum_{i=1}^{n_1} i + \sum_{j=2}^{k+1} 
 \left(1-2^{-(k+2-j)}+(k+2-j)\cdot \epsilon\right) \cdot \sum_{i=n_1+\ldots+n_{j-1}+1}^{n_1+\ldots+n_{j}} i.\vspace{-0.2em}
\end{equation}
A sketch of these two packings for $k=3$ is shown in Figure~\ref{fig:packings}.

\begin{figure}[t]
\centering
\includegraphics[width=\textwidth]{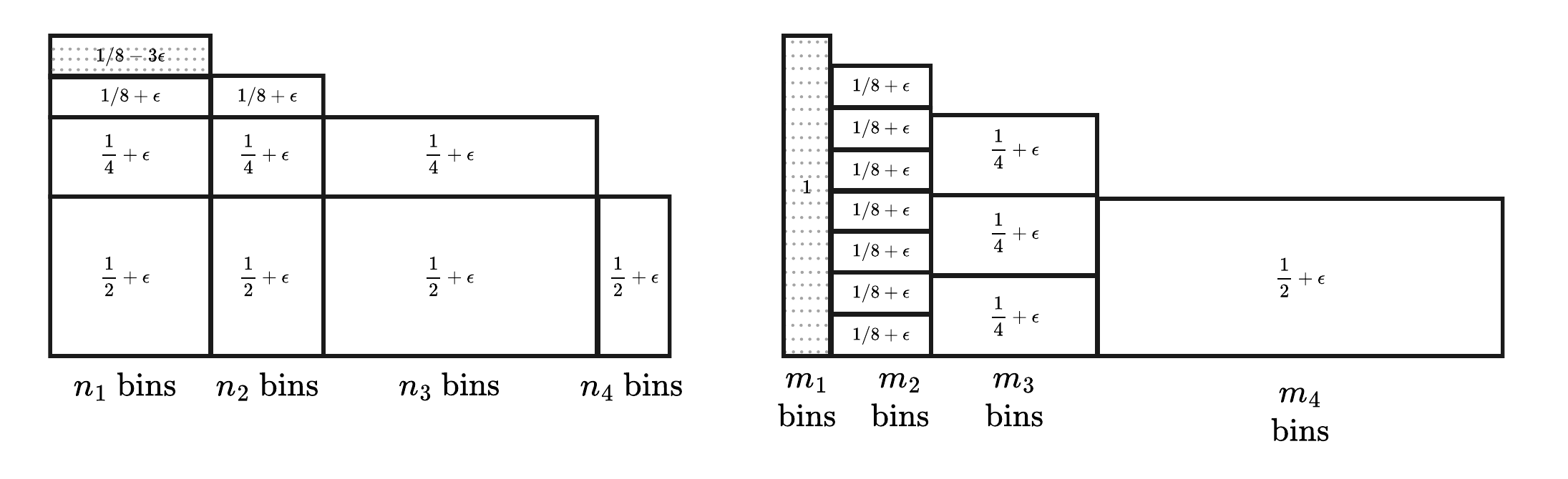}\vspace{-1.5em}
\caption{\small Sketch of the ``good'' packing (left) and the packing returned by \KB (right) for the instance defined before Lemma~\ref{quad}, for $k=3$. The dotted blocks represent bunches of tiny items. The $m_i$'s are chosen so that the number of items in each class is the same in both packings, i.e., $m_1=n_1\cdot (\nicefrac{1}{8}-3\epsilon)$, $m_2=\nicefrac{(n_1+n_2)}{7}$, $m_3=\nicefrac{(n_1+n_2+n_3)}{3}$ and $m_4=n_1+n_2+n_3+n_4$. \vspace{-0.5em}
\label{fig:packings}}
\end{figure}

\medskip
The expressions~\eqref{GBi} and~\eqref{OPTi} are not well suited for finding the values of $n_i$ producing the best lower bound on $\mathcal{R}(\KB)$. The next lemma shows how we can instead focus on maximizing a more amenable ratio of two quadratics.\vspace{-0.2em}\enlargethispage{1em}

\begin{lemm}\label{quad}
 Let $\Lb$ be the $(k+1)\times (k+1)$ lower triangular matrix with all elements equal to $1$ in the lower triangle, $\ub:=[\frac{1}{2^{k+1}},\frac{1}{2^{k+1}},\frac{1}{2^{k}},\ldots,\frac{1}{2^2}]\in\R^{k+1}$,
 $\vb:=[\frac{1}{2^{k}},\frac{1}{2^{k}-1},\frac{1}{2^{k-1}-1},\ldots,\frac{1}{2^{2}-1},\frac{1}{2-1}]\in\R^{k+1}$, and let $\Ub:=\Diag(\ub)$ and $\Vb:=\Diag(\vb)$. Then, for all $\xb\in\R_{>0}^{k+1}$, it holds\vspace{-0.2em}
 \[
  \mathcal{R}(\KB) \geq R(\xb):= \frac{\xb^T \Lb^T \Vb \Lb^T \Ub \Lb \Vb \Lb \xb}{\xb \Lb^T \Ub \Lb \xb}.\vspace{-0.2em}
 \]
\end{lemm}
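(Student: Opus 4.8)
The plan is to exhibit $R(\xb)$ as a limit of ratios $\KB/\OPT$ over the family of instances built above, as the scale of $\nb$ grows and the precision parameter $\epsilon$ tends to $0$. Both the numerator and denominator of $R$ are quadratic forms, so $R$ is invariant under $\xb\mapsto t\xb$ and continuous on $\R_{>0}^{k+1}$; since any direction can be approximated by integer vectors which, after multiplication by a suitable integer, meet the divisibility requirements, it suffices to prove $\mathcal{R}(\KB)\geq R(\nb)$ for such $\nb$ and extend by continuity. As $\OPT$ is bounded above by the right-hand side $U$ of~\eqref{OPTi}, I have $\mathcal{R}(\KB)\geq \KB/\OPT\geq \KB/U$ for every instance, so it is enough to show $\KB/U\to R(\nb)$ along the limit.

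The first step is to read off the combinatorial data as linear images of $\nb$ under $\Lb$, which takes prefix sums: $P_j:=(\Lb\nb)_j=n_1+\ldots+n_j$ is the number of items packed in the first $j$ ``good'' groups. The definition of the multiplicities then gives $m_j=v_jP_j$ exactly for $j\ge 2$, while the first component satisfies $m_1=v_1P_1-n_1k\epsilon$, i.e.\ $m_1/(v_1P_1)=1-k/s\to1$; hence $\mb=\Vb\Lb\nb$ up to a relative error $O(1/s)$. Applying $\Lb$ a second time, the cumulative \KB group sizes $M_j:=m_1+\ldots+m_j$ equal $(\Lb\Vb\Lb\nb)_j$ up to the same relative error, whereas the cumulative sizes of the ``good'' packing are exactly $(\Lb\nb)_j=P_j$.

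The crux is the cost computation. Substituting $\sum_{i=a+1}^{b}i=\tfrac12(b^2-a^2)+O(b)$ into~\eqref{GBi} and~\eqref{OPTi}, and using that every coefficient $c_j$ converges to $\gamma_j:=1-2^{-(k+2-j)}$ as $\epsilon\to0$, both costs take the form $\tfrac12\big[z_1^2+\sum_{j=2}^{k+1}\gamma_j(z_j^2-z_{j-1}^2)\big]$ up to lower-order terms, with $z_j=M_j$ for \KB and $z_j=P_j$ for $U$. An Abel summation rewrites this expression as $\sum_{j=1}^{k+1}u_jz_j^2=\zb^T\Ub\zb$: one checks $\tfrac12(1-\gamma_2)=2^{-(k+1)}=u_1$, then $\tfrac12(\gamma_j-\gamma_{j+1})=2^{-(k+3-j)}=u_j$ for $2\le j\le k$, and finally $\tfrac12\gamma_{k+1}=2^{-2}=u_{k+1}$, which is precisely the definition of $\ub$. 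This is what forces the matrix $\Ub$ to appear and pins down the leading quadratic forms.

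It then remains to assemble the limit. Scaling $\nb\mapsto t\nb$ makes $P_j$ and $M_j$ grow like $t$, so the $O(b)$ terms from the triangular sums and the $O(1/s)$ corrections to $\mb$ are relatively negligible against the $t^2$-sized quadratic part; dividing numerator and denominator by $t^2$ and letting $t\to\infty$ and $s\to\infty$ gives
\[
\frac{\KB}{U}\ \longrightarrow\ \frac{(\Lb\Vb\Lb\nb)^T\Ub(\Lb\Vb\Lb\nb)}{(\Lb\nb)^T\Ub(\Lb\nb)}=R(\nb).
\]
Hence $\mathcal{R}(\KB)\geq R(\nb)$, and scale-invariance together with continuity extends the bound to all $\xb\in\R_{>0}^{k+1}$. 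The main obstacle is the bookkeeping of these coupled limits: one has to increase $s$ and the scaling factor simultaneously while keeping $m_1=n_1(s-k)/(2^ks)$ and the quantities $N_j/(2^j-1)$ integral, and to verify that all error terms remain uniformly of lower order so that the limit is genuinely $R(\nb)$.
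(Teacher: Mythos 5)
Your proof is correct and follows essentially the same route as the paper's: the same instance family, the same identification of $\Lb\nb$ and $\Lb\Vb\Lb\nb$ as the cumulative group sizes, and the same reduction of both costs to the quadratic form $\zb^T\Ub\zb$ in the limit $t\to\infty$, $\epsilon\to 0$ (your Abel summation is just an explicit version of the paper's $\delta(\wb)$ computation, and your density/continuity step corresponds to the paper's choice $\nb=\alpha\lfloor t\xb\rfloor$). No substantive differences.
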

\begin{proof}
Let $\wb\in\R^{k+1}$ and $\ab\in\mathbb{N}^{k+1}$ such that $a_1<a_2<\ldots<a_{k+1}$, with $a_0:=0$ for notational convenience, and define\vspace{-0.2em}
\begin{align*}
F(\ab):=\sum_{j=1}^{k+1} w_j \sum_{i=a_{j-1}+1}^{a_j} i &=\
\sum_{j=1}^{k+1} \frac{w_j}{2} (a_j^2-a_{j-1}^2 +a_j-a_{j-1})\\
&=\ \ab^T \Diag(\delta(\wb)) \ab + \ab^T \delta(\wb),\vspace{-0.2em}
\end{align*}
where $\delta(\wb):=\frac{1}{2}[w_1-w_2, w_2-w_3,\ldots, w_k-w_{k+1},w_{k+1}]^T\in\R^{k+1}$. Now, assume that $\ab = M \lfloor t\xb \rfloor$ for some arbitrary matrix $M$,  $\xb\in\R_{>0}^{k+1}$ and $t>0$, i.e., $\ab= M \bb$ for some $\bb\in\mathbb{N}^{k+1}$ such that $b_i=\lfloor t x_i\rfloor$, for all $i\in[k+1]$. Using the fact that $\ab = t\cdot (M\xb+o_{t\to\infty}(1))$ and that $F$ is quadratic, we obtain an asymptotic approximation by neglecting the linear terms:
\[F(M \lfloor t\xb \rfloor) = t^2 \cdot \left(\xb^T M^T \Diag\big(\delta(\wb)\big) M \xb + o_{t\to\infty}(1) \right).\]

Now, let $\nb = \alpha \lfloor t \xb \rfloor$ for some arbitrary $\xb\in\R_{>0}^{k+1}$, where $\alpha$  is such that $m_1,m_2,\ldots,m_{k+1}$ are all integers, e.g. $\alpha:= s \cdot 2^k \cdot \prod_{j=1}^k (2^{j}-1)$. We observe that our upper bound on \OPT in~\eqref{OPTi} is of the form $F(\ab)$ with $\ab=[n_1,n_1+n_2,\ldots,n_1+\ldots+n_k]^T=L\nb$ and 
$\wb = [1,1-2^{-k}+k\epsilon,\ldots,1-2^{-1}+\epsilon]^T$, with $\delta(\wb)\to\ub$ as $\epsilon\to 0$. Thus, we have shown
$\OPT \leq t^2(\alpha^2 \xb^T \Lb^T \Db_{\epsilon} \Lb \xb +o_{t\to\infty}(1)$), where $\Db_{\epsilon}:=\Diag(\delta(\wb))$ converges to $\Ub$ as $\epsilon\to 0$.

Similarly,~\eqref{GBi} is also of the form $F(\ab)$, but with
$\ab=L\mb$ and the  weights $\wb$ replaced by $\wb':=[1,1-2^{-k}+(2^{k}-1)\epsilon,\ldots,1-2^{-1}+(2^{-1}-1)\epsilon]$.
Using $\boldsymbol{m}\to\Vb\Lb \nb$ and  $\delta(\wb')\to\ub$ as $\epsilon\to 0$ gives 
the asymptotic approximation
$\KB = t^2 (\alpha^2 \xb^T \Lb^T \Vb \Lb^T \Db_{\epsilon}' \Lb \Vb \Lb \xb + o_{t\to\infty}(1))$, where $\Db_{\epsilon}':=\Diag(\delta(\wb'))$ converges to $\Ub$ as $\epsilon\to 0$. 
This implies the existence of a sequence of instances (with 
$t\to\infty$ and $\epsilon \to 0$) such that
$\KB/\OPT$ converges to a value larger than $R(\xb)$.\cqfd\vspace{-0.3em}
\end{proof}

\begin{prop}~\label{prob:LB_GB}
$\mathcal{R}(\KB) > 1.4334$. \vspace{-0.2em}
\end{prop}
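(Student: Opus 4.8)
The plan is to apply Lemma~\ref{quad}, which reduces the task of proving a lower bound on $\mathcal{R}(\KB)$ to exhibiting a single vector $\xb\in\R_{>0}^{k+1}$ for which the Rayleigh-type quotient $R(\xb)$ exceeds $1.4334$. Since the lemma gives $\mathcal{R}(\KB)\geq R(\xb)$ for \emph{every} positive $\xb$ and every $k$, it suffices to pick a convenient dimension $k$ and a well-chosen $\xb$ and then verify the inequality $R(\xb)>1.4334$ by a direct (finite) computation. So the first step is to fix a value of $k$ large enough that the asymptotic structure of the instance is visible, and then to write out the matrices $\Lb$, $\Ub=\Diag(\ub)$, and $\Vb=\Diag(\vb)$ explicitly for that $k$.

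The central object is the quotient
\[
R(\xb)=\frac{\xb^T \Lb^T \Vb \Lb^T \Ub \Lb \Vb \Lb \xb}{\xb^T \Lb^T \Ub \Lb \xb},
\]
and I would treat this as a generalized Rayleigh quotient: the supremum of $R(\xb)$ over all $\xb$ equals the largest generalized eigenvalue of the pencil $(\Lb^T \Vb \Lb^T \Ub \Lb \Vb \Lb,\ \Lb^T \Ub \Lb)$. The denominator matrix $\Lb^T\Ub\Lb$ is symmetric positive definite (since $\Ub$ has strictly positive diagonal and $\Lb$ is invertible, being lower triangular with unit diagonal), so the pencil is well-defined and the optimal $\xb$ is the corresponding eigenvector. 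Concretely, I would compute this largest generalized eigenvalue for increasing values of $k$ and observe that it increases and stabilizes; the lower bound $1.4334$ should be attained already for a moderate $k$. The value $1.4334$ strongly suggests this is a truncation of a limiting constant obtained as $k\to\infty$, so I expect the numerical maximum to converge from below to something slightly above $1.4334$.

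To make the argument rigorous rather than merely numerical, I would not rely on an eigensolver but instead exhibit an explicit rational (or simple closed-form) vector $\xb^\star$ for a specific $k$ and verify $R(\xb^\star)>1.4334$ by exact arithmetic. This is legitimate because Lemma~\ref{quad} requires only one feasible $\xb$; there is no need to prove optimality of the chosen vector. Thus the proof reduces to: (i) fix $k$, (ii) write down $\xb^\star$, (iii) evaluate the two quadratic forms $\xb^{\star T}\Lb^T \Vb \Lb^T \Ub \Lb \Vb \Lb \xb^\star$ and $\xb^{\star T}\Lb^T\Ub\Lb\xb^\star$, and (iv) check that their ratio exceeds $1.4334$.

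The main obstacle is purely computational bookkeeping rather than conceptual: the quartic appearance of $\Lb$ and the nested diagonal scalings make the quadratic forms tedious to expand by hand, and one must be careful with the off-by-one structure of $\ub$ and $\vb$ (note $\ub$ repeats $\tfrac{1}{2^{k+1}}$ in its first two coordinates and $\vb$ has the irregular first entry $\tfrac{1}{2^k}$). To keep the computation clean, I would exploit that $\Lb\xb$ is just the vector of partial sums of $\xb$, so that applying $\Lb$ and $\Lb^T$ amounts to prefix and suffix summations; interleaving these with the diagonal multiplications by $\vb$ and $\ub$ turns the whole expression into an explicitly summable form. The real risk is an arithmetic slip in the final rational evaluation, so I would double-check the numerator and denominator independently before dividing.
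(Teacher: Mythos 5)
Your approach is exactly the paper's: apply Lemma~\ref{quad} to a single explicit positive vector obtained from the generalized eigenvalue problem for the pencil $(\Lb^T\Vb\Lb^T\Ub\Lb\Vb\Lb,\ \Lb^T\Ub\Lb)$, and verify $R(\xb)>1.4334$ by direct evaluation. The paper instantiates this with $k=7$ and the rounded optimal eigenvector $\xb=[0.97,0.01,0.01,0.01,0.03,0.07,0.15,0.38]$, so the only thing missing from your write-up is that concrete witness and its evaluation.
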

\begin{proof}
 This follows from applying Lemma~\ref{quad} to the vector $\xb=[0.97, 0.01, 0.01,$ $0.01, 0.03, 0.07, 0.15, 0.38]\in \R_{>0}^8$. This vector is in fact the optimal solution of the problem
 of maximizing $R(\xb)$ over $\R^8$, rounded after 2 decimal places for the sake of readability.  To find this vector, we reduced the problem of maximizing $R(\xb)$ to an eigenvalue problem, and recovered $\xb$ by applying a reverse transformation to the corresponding eigenvector. Note that optimizing for $k=50$ instead of $k=7$ only improves the bound by $3\cdot 10^{-5}$.\cqfd\vspace{-1.5em}
\end{proof}

\enlargethispage{1em}

\section{First-Fit-Increasing Reordered}\label{sec:WFFIR} \vspace{-0.7em}

In this section, we analyze the \WFFIR algorithm. First recall that \WFFI (without bins reordering) has an approximation ratio of $2$. To see that this bound is tight, consider the instance with 2 items such that $w_1=s_1=\epsilon$ and $w_2=s_2=1$. Obviously, the optimal packing puts the large item in the first bin and the small item in the second bin, so that $\OPT=1+2\epsilon$. On the other hand, since the two items have the same ratio $s_i/w_i$, \WFFI could put the small item in the first bin, which yields $\WFFI=\epsilon +2$. Therefore, $\WFFI/\OPT\to2$ as $\epsilon\to 0$. In this instance however, we see that reordering the bins by decreasing weight solves the issue. 

It is easy to find an instance in which $\WFFIR/\OPT$ approaches $3/2$, though: Let $w_1=s_1=\epsilon$ and $w_2=w_3=s_2=s_3=\frac{1}{2}$. Now, the optimal packing puts items $2$ and $3$ in the first bin, which gives $\OPT=1+2\epsilon$, while $\WFFIR$ could return the bins $B_1=\{1,2\}, B_2=\{3\}$, so that $\WFFIR=\frac{1}{2}+\epsilon+2\cdot \frac{1}{2}=\frac{3}{2}+\epsilon$.

We next show how to obtain a stronger bound on $\mathcal{R}(\WFFIR)$. To this end, we recall that for all $k\in\mathbb{N}$ and $\epsilon>0$ sufficiently small, there exists an instance of \textsc{Bin Packing} with the following properties~\cite{GareyGU72,DosaS13}: There are $n=30k$ items, which can be partitioned in three classes. The items $i=1,\dots,10k$ are \emph{small} and have size $s_i = 1/6+\delta_i\epsilon$; then there are $10k$ \emph{medium} items of size $s_i=1/3+\delta_i\epsilon$ ($i=10k+1\ldots,20k$), and $10k$ \emph{large} items of size $s_i=1/2+\epsilon$ ($i=20k+1,\ldots,30k$). The constants $\delta_i\in\R$ can be positive or negative and are chosen in a way that, if $\FF$ considers the items in the order $1,\ldots,n$, it produces a packing $B_1,\ldots B_{17k}$ in which the first $2k$ bins contain 5 small items, the next $5k$ bins contain 2 medium items and the last $10k$ bins contain a single large item.
On the other hand, there exists a packing of all items into $10k+1$ bins, in which $10k-1$ bins consist of a small, a medium and a large item and have size $1-O(\epsilon)$, and the two remaining bins have size $1/2+O(\epsilon)$ (a small with a medium item, and a large item alone).

We can transform this \textsc{Bin Packing} instance into a MWSBP instance, by letting $w_i=s_i$, for all $i$. This ensures that all items have the same ratio $s_i/w_i$, so we can assume that \WFFIR considers the items in any order we want. In addition, 
we consider two integers $u$ and $v$, and we increase the number of medium items from $10k$ to $10k+2u$ (so the medium items are $i=10k+1,\ldots,20k+2u$) and the number of large items from $10k$ to $10k+2u+v$ (so the large items are $i=20k+2u+1,\ldots,30k+4u+v$). 
The $\delta_i$'s are unchanged for $i=1,\ldots 20k$, and we let $\delta_i=1$ for all additional medium items ($i=20k+1,\ldots,20k+2u$). Then, assuming that \WFFIR considers the items in the order $1,2,\ldots,$ the algorithm
packs $2k$ bins with 5 small items, $5k+u$ bins with 2 medium items and the last $10k+2u+v$ bins with a single large item. On the other hand, we can use the optimal packing of the original instance, and pack the additional items into $2u$ bins of size $5/6+2\epsilon$ containing a medium and a large item, and $v$ bins with a single large item. This amounts to a total of $10k-1$ bins of size $1-O(\epsilon)$, $2u$ bins of size $5/6+O(\epsilon)$ and $v+2$ bins of size $1/2+O(\epsilon)$. In the limit $\epsilon\to 0$, we get \vspace{-0.5em}
\begin{align*}
 \WFFIR &= \frac{5}{6}\sum_{i=1}^{2k} i  + \frac{2}{3}\sum_{i=2k+1}^{7k+u} i + \frac{1}{2}\sum_{i=7k+u+1}^{17k+3u+v} i\\
 &= \scalemath{0.93}{\frac{5}{6}k(2k+1)+\frac{1}{3}(5k+u)(9k+u+1)+\frac{1}{4}(10k+2u+v)(24k+4u+v+1)}
\end{align*}
and\vspace{-0.5em}
\begin{align*}
\OPT &\geq \sum_{i=1}^{10k-1} i  + \frac{5}{6}\sum_{i=10k}^{10k-1+2u} i + \frac{1}{2}\sum_{i=10k+2u}^{10k+2u+v+1} i\\
&=5k(10k-1) + \frac{5}{6}u(20k+2u-1) + \frac{1}{4}(v+2)(20k+4u+v+1). 
\end{align*}

\clearpage
\begin{prop}\label{prop:WFFI_LB}
 $\mathcal{R}(\WFFIR) > 1.5576$.
\end{prop}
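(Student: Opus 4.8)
The plan is to exploit the two explicit formulas derived above for $\WFFIR$ and for the upper bound on $\OPT$, both of which are quadratic in the free parameters $k,u,v$. Since $\OPT$ is bounded from above, the ratio $\WFFIR/\OPT$ is bounded from below by $\WFFIR$ divided by the displayed upper bound, so it suffices to make this latter ratio large. Because both quadratics are homogeneous of degree two in $(k,u,v)$ up to lower-order terms, the natural move is to set $u=\beta k$ and $v=\gamma k$ for fixed reals $\beta,\gamma\geq 0$ and let $k\to\infty$. Keeping only the leading $k^2$-terms, the ratio converges to
\[
R(\beta,\gamma):=\frac{\frac53+\frac13(5+\beta)(9+\beta)+\frac14(10+2\beta+\gamma)(24+4\beta+\gamma)}{50+\frac56\beta(20+2\beta)+\frac14\gamma(20+4\beta+\gamma)},
\]
a ratio of two quadratics in the two variables $\beta,\gamma$. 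Hence $\mathcal{R}(\WFFIR)\geq \sup_{\beta,\gamma\geq 0} R(\beta,\gamma)$, and it remains to show that this supremum exceeds $1.5576$.

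Second, I would maximize $R$. Writing $z:=(1,\beta,\gamma)^T$, the numerator and denominator are quadratic forms $z^T A z$ and $z^T B z$ for explicit symmetric $3\times 3$ matrices $A,B$, so maximizing $R$ is a generalized Rayleigh-quotient problem whose optimum is the largest generalized eigenvalue of the pencil $(A,B)$ (exactly the eigenvalue reduction already used for the $\KB$ lower bound in Lemma~\ref{quad}). Solving $\det(A-\lambda B)=0$ and selecting the largest root yields an interior maximizer with $\beta^*>0$ and $\gamma^*>0$ (numerically $(\beta^*,\gamma^*)\approx(1.09,2.33)$), so the underlying construction is valid, and the optimal value is $\approx 1.55761$, strictly above $1.5576$.

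Third, to turn this into a rigorous strict inequality I would avoid any reliance on the numerical eigenvalue computation and simply evaluate $R$ at a fixed rational point near the optimum, for instance $\beta=\tfrac{11}{10}$ and $\gamma=\tfrac73$, checking by exact arithmetic that $R(\tfrac{11}{10},\tfrac73)=\tfrac{120481}{77350}>1.5576$. The accompanying family of instances is obtained by taking $k$ a large multiple of $30$ and setting $u=\tfrac{11}{10}k$ and $v=\tfrac73 k$ (both integers), together with $\epsilon\to 0$; for such instances $\WFFIR/\OPT\to R(\tfrac{11}{10},\tfrac73)$, and since this limit is a constant strictly larger than $1.5576$, the ratio eventually exceeds $1.5576$, which proves the claim.

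The main obstacle is the extreme tightness of the bound: the maximum of $R$ exceeds $1.5576$ by only about $10^{-5}$, so the optimization must be carried out accurately and the final comparison done with exact rational arithmetic rather than floating point. A secondary point, routine but worth verifying, is that flooring $u$ and $v$ to integers and sending $\epsilon\to 0$ does not affect the limit, which holds because these operations perturb only lower-order terms of the two quadratics.
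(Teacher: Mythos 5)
Your proposal is correct and follows essentially the same route as the paper: both rest on the instance family constructed just before the proposition and on optimizing the resulting ratio of quadratics via a generalized eigenvalue computation, the only difference being that the paper substitutes the concrete integers $k=10350$, $u=11250$, $v=24000$ into the exact formulas while you pass to the asymptotic ratio with $u=\beta k$, $v=\gamma k$ and certify the rational value $\tfrac{120481}{77350}>1.5576$ at $(\beta,\gamma)=(\tfrac{11}{10},\tfrac{7}{3})$. Your exact-arithmetic check is accurate (the margin is about $8\cdot 10^{-6}$), so the argument goes through.
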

\begin{proof}
The bound is obtained by substituting $k=10350,u=11250,v=24000$ in the above expressions. As in the previous section, these values can be obtained by reducing the problem of finding the best values of $k,u,v$ to an eigenvalue problem and by scaling-up and rounding the obtained eigenvector.\cqfd
\end{proof}

\enlargethispage{1em}

All bad examples for \WFFIR (and even for all algorithms listed in Table~\ref{tab}) have the property that all items have the same ratio $s_i/w_i$. This makes sense intuitively, as the algorithm does not benefit from sorting 
the items anymore, i.e., items can be presented to the algorithm in an adversarial order. While the only known upper bound on the approximation ratio of \WFFIR is 2 (as \WFFIR can only do better than \WFFI), we believe that $\mathcal{R}(\WFFIR)$ is in fact much closer to our lower bound from Proposition~\ref{prop:WFFI_LB}. To support this claim, we next show an upper bound of approx.\ 1.6354 for instances with $s_i=w_i$.

\begin{theo}\label{theo:UB_WFFIR}
For all MWSBP instances with $s_i=w_i$  for all $i\in[n]$, it holds\vspace{-0.1em}
\[\WFFIR \leq \frac{7+\sqrt{37}}{8}\cdot \OPT.\]
\end{theo}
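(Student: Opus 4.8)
The plan is to bound \WFFIR against explicit lower bounds on \OPT that exploit the identity $w_i=s_i$, and then to reduce the worst case to a low-dimensional optimization whose optimum is the larger root $\frac{7+\sqrt{37}}{8}$ of $16\rho^2-28\rho+3=0$. First I would record the two facts about the packing produced by \FF (before reordering) that survive the identity $w_i=s_i$. Writing $B_1,\dots,B_p$ for the bins sorted by nonincreasing weight, the classical argument behind~\cite{DosaS13} shows that whenever a bin is opened its first item did not fit into any earlier bin, and since loads only grow this yields the two tools I need: (i) $w(B_i)+w(B_j)>1$ for any two distinct bins, so all bins but at most one satisfy $w(B_k)>\tfrac12$; and (ii) every item lying in a bin opened after some bin $B$ has size strictly larger than the residual capacity $1-w(B)$ of $B$. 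Property~(ii) is the crucial coupling: the only way \FF can leave a large residual is for the remaining items to be correspondingly large, and large items are exactly those that \OPT cannot combine either.

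Next I would establish a clean volume-type lower bound on \OPT. Assigning to each item $i$ its optimal bin index $C_i$ and ordering the items so that $C_{i_1}\le C_{i_2}\le\cdots$, feasibility $s(O_k)\le1$ gives $C_{i_r}\ge\sum_{s\le r}w_{i_s}$, whence
\[
\OPT=\sum_i C_i\,w_i\ \ge\ \sum_r w_{i_r}\sum_{s\le r}w_{i_s}\ =\ \tfrac12\Big(\big(\textstyle\sum_i w_i\big)^2+\sum_i w_i^2\Big).
\]
This bound is tight when the items can be packed into full bins, but on its own it only yields the ratio $2$: indeed $\WFFIR=\sum_k k\,w(B_k)$ together with Chebyshev's sum inequality and $w(B_k)>\tfrac12$ gives $\WFFIR\le\tfrac12(p+1)W$ with $W:=\sum_i w_i$, and since the half-full property only forces $p<2W+1$, the two estimates combine to no better than $\WFFIR\lesssim 2\,\OPT$.

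The point of the identity $w_i=s_i$ is that the two extremes cannot occur together, and this is where property~(ii) enters. When the sorted profile $w(B_1)\ge\cdots\ge w(B_p)$ is far from constant, Chebyshev and the volume bound are both loose, but then \FF has wasted little space; conversely, the only way to push $p$ close to $2W$ is to have many nearly-half-full bins, which by~(ii) must consist of items of size $>\tfrac12$, and such items are singletons in \OPT as well, so on that part of the instance \OPT already matches \WFFIR. The natural way to make this dichotomy quantitative is to add the second lower bound $\OPT\ge\tfrac14\,\ell^2$, where $\ell$ counts the items of size $>\tfrac12$ (placing these necessarily-singleton bins in the first $\ell$ positions).

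Finally I would parametrise an admissible instance by $W$, by the number of bins $p$, and by the blocking parameter supplied by~(ii), so that $\WFFIR/\OPT$ is majorised by the maximum of a ratio of two quadratic forms — exactly the kind of quantity optimised in Lemma~\ref{quad}, but now used as an \emph{upper} bound. I expect the maximiser to balance a ``dense, space-wasting'' regime (where the volume bound is active) against a ``chunky'' regime (where the large-item bound is active), with the stationarity condition of this balance being precisely $16\rho^2-28\rho+3=0$, giving $\rho=\tfrac{7+\sqrt{37}}{8}$. The main obstacle is this very reduction: proving that no First-Fit profile compatible with~(i)--(ii) can beat the two-regime optimum — that is, that the two lower bounds on \OPT together with the structural properties genuinely certify the ratio — after which the maximisation of the quadratic ratio is routine.
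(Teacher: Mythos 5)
Your outline correctly identifies several ingredients that the paper also uses --- the First-Fit property $w(B_i)+w(B_j)>1$, the observation that the items forced into singleton bins by \texttt{First-Fit} are ``large'' and hence occupy distinct bins of \OPT as well, the reduction to maximizing a ratio of quadratic forms, and the fact that $\nicefrac{(7+\sqrt{37})}{8}$ is the larger root of $16\rho^2-28\rho+3=0$. However, there is a genuine gap at the heart of the argument, and it is not merely the reduction step you flag as ``the main obstacle'': the two lower bounds on \OPT that you propose to certify the ratio are quantitatively insufficient, even taken together. To see this, evaluate them on the extremal configuration (which is also the paper's tight case): $r$ bins of weight $\nicefrac23$ followed by $t$ singleton bins of weight $\nicefrac12$, with $t/r\to\frac29(1+\sqrt{37})\approx 1.574$ and the early bins filled with tiny items so that $\sum_i w_i^2=o(r^2)$. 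Then $W\approx 1.454\,r$, your volume bound gives $\approx \frac12 W^2\approx 1.057\,r^2$, your large-item bound gives $\frac{t^2}{4}\approx 0.619\,r^2$, while $\WFFIR\approx 1.740\,r^2$. The best of your two bounds therefore certifies only a ratio of about $1.646$, which exceeds $\nicefrac{(7+\sqrt{37})}{8}\approx 1.6354$. No optimization over regimes can repair this, since the failure occurs at a single admissible instance.

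The missing idea is a lower bound on \OPT that couples the two effects \emph{simultaneously}: the $s$ forced singletons of weights $x_1\geq\ldots\geq x_s>\nicefrac12$ must lie in $s$ distinct optimal bins with $w(O_i)\geq x_i$, \emph{and} the remaining weight $\sum_i y_i$ from the early bins can, at best, only top those bins up to capacity $1$. This yields the paper's ``pseudo-packing'' bound $B(\yb,\xb)=\frac12\ell(\ell-1)+(x_\ell+\rho)\ell+\sum_{i>\ell} i\,x_i$, in which the first $\ell\approx \nicefrac{4r}{3}$ bins are full and the remaining singleton bins keep weight $x_i\approx\nicefrac12$; on the configuration above this gives $\approx\frac{4r^2}{9}+\frac{t^2}{4}\approx 1.064\,r^2$, which is exactly what is needed. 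Your volume bound ignores the indivisibility of the large items (it would allow \OPT to pack weight $1$ into every bin), and your bound $\frac14\ell^2$ ignores that the optimal bins containing the large items are topped up by the small ones; only their combination into a single pseudo-packing is strong enough. Once that bound is in place, one still has to carry out the averaging/rounding reduction (the paper's Claims~\ref{C2}--\ref{C5}) to a two-parameter quadratic ratio and verify the bound, e.g.\ via the paper's copositivity certificate --- a step your proposal defers entirely.
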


\begin{proof}
Consider an instance of MWSBP with $w_i=s_i$ for all $i\in[n]$ and
denote by $W_1\geq \ldots\geq W_p$ the weight (or equivalently, the size) of the bins $B_1,\ldots,B_p$ returned by $\WFFIR$. 
We first handle the case in which there is at most one bin with weight $\leq \frac{2}{3}$, in which case we obtain a bound of $\frac32$:

\begin{myclaim}\label{C0}
 If $W_{p-1}>\frac23$, then  $\WFFIR\leq \frac32 \OPT$.
\end{myclaim}

This claim is proved in the appendix. We can thus assume w.l.o.g.\ that there are at least $2$ bins with weight $\leq \frac23$.
Let $r\in[p-1]$ denote the index of the first bin such that $W_r\leq \nicefrac{2}{3}$ and define $s:=p-r\in[p-1]$.
We define the vectors $\yb\in\R^r$ and $\xb\in\R^s$ such that
$y_i=W_i$ ($\forall i\in[r]$) and $x_i=W_{r+i}$ ($\forall i\in [s]$). By construction, we have $1\geq y_1 \geq \ldots \geq y_{r-1} > \nicefrac{2}{3}\geq y_r \geq x_1 \geq \ldots \geq x_s$. 
We also define $x_0:=y_r$ and $x_i:=0$ for all $i>s$ for the sake of simplicity.
Note that $W_i+W_j>1$ must hold for all $i\neq j$, as otherwise the 
\texttt{First-Fit} algorithm would have put the items of bins $i$ and $j$ into a single bin. This implies $x_{s-1}+x_s > 1$, hence $x_{s-1}=\max(x_{s-1},x_s) > \nicefrac{1}{2}$.
With this notation, we have: \vspace{-0.6em}\enlargethispage{1em}
\[
 \WFFIR = A(\yb,\xb):=\sum_{i=1}^{r} i \cdot y_i + \sum_{i=1}^{s} (r+i) \cdot  x_i.\vspace{-0.6em}
\]
Next, we prove a lower bound on \OPT depending on $\xb$ an $\yb$. Observe that among the $\WFFIR$-bins $B_r, B_{r+1},\ldots,B_{r+s}$ with weight$\leq \nicefrac{2}{3}$, at most one of them can contain two items or more (the first of these bins that was
opened, as items placed in later bins have a size --and hence a weight-- strictly larger than $\nicefrac{1}{3}$). Thus, $s$ out of these $s+1$ bins must contain a \emph{single item}, and there is no pair of such single items fitting together in a bin. Therefore, the $s$ single items must be placed in distinct bins of the optimal packing. This implies
\begin{equation}\label{lowerO}
w(O_i)\geq w(B_{r+i}) = x_i,\quad  \text{for all}\ \ i\in[s].
\end{equation}

 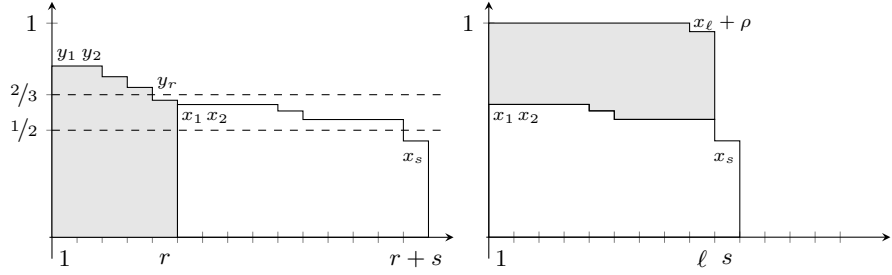
\begin{figure}[t]
\centering
 \begin{tikzpicture}
\tikzmath{
\y1=0.8;
\y2=0.8;
\y3=0.75;
\y4=0.7;
\y5=0.64;
\x1=0.62;
\x4=0.62;
\x5=0.59;
\x6=0.55;
\x9=0.55;
\xx0=0.45;
}
\begin{axis}[
axis x line=middle,
axis y line=middle,
xtick={1,...,14},
xticklabels={},
xlabel near ticks,
ytick={0,0.5,0.666,1},
yticklabels={$0$,$\nicefrac12$,$\nicefrac23$,$1$},
xmax=16,
ymax=1.1,
xmin=-0.1,
ymin=-0.1,
width=0.57\textwidth,
height=5cm,
clip=false
]
\draw[fill=black!10] (axis cs:0,\y1) -- (axis cs:1, \y1) -- (axis cs:1, \y2) -- (axis cs:2, \y2) -- (axis cs:2, \y3) -- (axis cs:3, \y3)-- (axis cs:3, \y4) -- (axis cs:4, \y4) -- (axis cs:4, \y5) -- (axis cs:5, \y5) -- (axis cs:5,0) -- (axis cs:0,0) -- cycle;
\draw[] (axis cs:5,\x1) -- (axis cs:9, \x4) -- (axis cs:9, \x5) -- (axis cs:10, \x5) -- (axis cs:10, \x6) -- (axis cs:14, \x9) -- (axis cs:14, \xx0) -- (axis cs:15,\xx0) -- (axis cs: 15,0) -- (axis cs: 5,0) -- cycle;
\draw[dashed] (axis cs:0,0.666) -- (axis cs:15.5,0.666);
\draw[dashed] (axis cs:0,0.5) -- (axis cs:15.5,0.5);
\node at (axis cs:0.6,\y1+0.05) {\scriptsize $y_1$};
\node at (axis cs:1.6,\y2+0.05) {\scriptsize $y_2$};
\node at (axis cs:4.6,\y5+0.08) {\scriptsize $y_r$};
\node at (axis cs:5.6,\y5-0.08) {\scriptsize $x_1$};
\node at (axis cs:6.6,\y5-0.08) {\scriptsize $x_2$};
\node at (axis cs:14.4,\xx0-0.08) {\scriptsize $x_s$};
\node at (axis cs:0.5,-0.1) {$1$};
\node at (axis cs:4.5,-0.1) {$r$};
\node at (axis cs:14.5,-0.1) {$r+s$};
\end{axis}
\end{tikzpicture}%
\begin{tikzpicture}
\tikzmath{
\x1=0.62;
\x4=0.62;
\x5=0.59;
\x6=0.55;
\x9=0.55;
\xx0=0.45;
\r=0.41;
\z=\x9+\r;
}
\begin{axis}[
axis x line=middle,
axis y line=middle,
xtick={1,...,14},
xticklabels={},
xlabel near ticks,
ytick={0,1},
yticklabels={$0$,$1$},
xmax=16,
ymax=1.1,
xmin=-0.1,
ymin=-0.1,
width=0.57\textwidth,
height=5cm,
clip=false
]
\draw[fill=black!10] (axis cs:0,1) -- (axis cs:8, 1) -- (axis cs:8, \z) -- (axis cs:9, \z) -- (axis cs:9,\x9) -- (axis cs:5,\x9) -- (axis cs:5,\x5) -- (axis cs:4,\x5) -- (axis cs:4,\x4) -- (axis cs:0,\x1) -- cycle;
\draw[] (axis cs:0,\x1) -- (axis cs:4, \x4) -- (axis cs:4, \x5) -- (axis cs:5, \x5) -- (axis cs:5, \x6) -- (axis cs:9, \x9) -- (axis cs:9, \xx0) -- (axis cs:10,\xx0) -- (axis cs: 10,0) -- (axis cs: 0,0) -- cycle;
\node at (axis cs:9.3,1.) {\scriptsize $x_{\ell}+\rho$};
\node at (axis cs:0.5,-0.1) {$1$};
\node at (axis cs:8.5,-0.1) {$\ell$};
\node at (axis cs:9.5,-0.1) {$s$};
\node at (axis cs:0.6,0.64-0.08) {\scriptsize $x_1$};
\node at (axis cs:1.6,0.64-0.08) {\scriptsize $x_2$};
\node at (axis cs:9.4,\xx0-0.08) {\scriptsize $x_s$};
\end{axis}
\end{tikzpicture}%
\vspace{-0.5em}
\caption{\small Sketch of a packing defining $A(\yb,\xb)$ (left), and corresponding pseudo-packing defining $B(\yb,\xb)$ (right): The values of $\ell$ and $\rho$ ensure equality of the two shaded areas. \label{fig:pseudo}\vspace{-2em}}
 \end{figure}

Now, let $\ell\in\mathbb{N}$ be the unique index such that $\sum_{i=1}^{\ell-1} (1-x_i) < \sum_{i=1}^r y_i \leq \sum_{i=1}^{\ell} (1-x_i)$, and define $\rho:=\sum_{i=1}^r y_i-\sum_{i=1}^{\ell-1} (1-x_i)\in(0,1-x_{\ell}]$. We claim that\vspace{-0.3em}
\[
 \OPT \geq B(\yb,\xb):= \frac{1}{2}\ell(\ell-1) + (x_\ell+\rho) \ell + \sum_{i=\ell+1}^s\ i\cdot  x_i,
\]
where the last sum is $0$ if $\ell\geq s$, which corresponds to a ``pseudo-packing'' with weight $1$ in bins $1,\ldots,\ell-1$, weight $x_i$ in the bins $\ell+1,\ldots,s$ and the weight of bin $\ell$ is adjusted to $x_{\ell}+\rho$ so that the total weight equals $\sum_i x_i + \sum_i y_i$, see Figure~\ref{fig:pseudo} for an illustration. This clearly gives a lower bound on $\OPT$, as transferring weight from late bins to early bins only improves the solution, and the pseudo-packing defining $B(\xb,\yb)$ packs the whole weight earlier than any packing  $O_1,\ldots,O_q$ satisfying~\eqref{lowerO}. We extend the definition of the function $B(\cdot,\cdot)$ to any pair of vectors $(\yb',\xb')\in\R^{r}_{>0}\times [0,1]^{t}$ with $t \leq s$, by setting $x'_{i}=0$ for all $i>t$.

The above inequalities implies that $\WFFIR/\OPT$
is bounded from above by
$R(\yb,\xb):=A(\yb,\xb)/B(\yb,\xb)$.
We next give a series of technical claims (proved in the appendix) which allows us to compute the maximum of $R(\yb,\xb)$
when $\xb$ and $\yb$ are as above.
The first claim shows that we obtain an
upper bound for some vectors $\yb'$ and $\xb'$ of the form
$\yb'=[\frac23,\ldots,\frac23,\alpha]\in\R^{r}$ and
$\xb'=[\alpha,\ldots,\alpha]\in\R^{t}$ for some $\alpha\in[\frac{1}{2},\frac{2}{3}]$ and $t\leq s$.
Its proof relies on averaging some coordinates of the vectors $\yb$ and $\xb$, and using a sequential rounding procedure to obtain equal coordinates in the vector $\xb$ and to decrease $y_1,\ldots,y_{r-1}$ to $\frac23$.

\begin{myclaim}\label{C2}
 There exists an $\alpha\in[\frac{1}{2},\frac23]$
 and an integer $t$ such that 
 $R(\yb,\xb)\leq R(\yb',\xb')$ 
 holds for the vectors $\yb'=[\frac23,\ldots,\frac23,\alpha]\in\R^r$ and $\xb'=[\alpha,\ldots,\alpha]\in\R^{t}$.
\end{myclaim}
Next, we give a handy upper bound for $R(\yb',\xb')$ when
$\yb'$ and $\xb'$ are in the form obtained in the previous claim.
\begin{myclaim}\label{C3}
 Let $\yb':=[\frac23,\ldots,\frac23,\alpha]\in\R^r$ and 
 $\xb':=[\alpha,\ldots,\alpha]\in\R^t$. If $\frac23(r-1)+\alpha\leq(1-\alpha)t$, then it holds
\[
 R(\yb',\xb')\leq H_1(\alpha,r,t) := \frac{3(1-\alpha)(3\alpha(t+1)(2r+t)+2r(r-1))}{3\alpha+4r^2+(6\alpha-2)r+9(1-\alpha)\alpha t(t+1)-2}.
\]
Otherwise (if $\frac23(r-1)+\alpha>(1-\alpha)t$), the following bound is valid:
\[
 R(\yb',\xb')\leq H_2(\alpha,r,t) :=\frac{6 r(r-1)+9\alpha(t+1)(2r+t)}{(2 r + 3 \alpha (t + 1) - 2) (2 r + 3 \alpha (t + 1) + 1)}.
\]
\end{myclaim}
We start by bounding the second expression. In that case, we obtain a bound equal to $\frac{13}{8}=1.625$.

\begin{myclaim}\label{C4}
 For all $\alpha\in[\frac12,\frac23]$, $r\geq 1$ and $0\leq t\leq \frac{\nicefrac{2}{3}(r-1)+\alpha}{1-\alpha}$,
 $H_2(\alpha,r,t)\leq\frac{13}{8}$.
\end{myclaim}

Then, we bound $H_1$. It turns out that the derivative of $H(\alpha,r,t)$ with respect to $\alpha$ is nonpositive over the domain of $\alpha$, so we obtain an upper bound by setting $\alpha=\frac12$.
\begin{myclaim}\label{C5}
 For all $\alpha\in[\frac12,\frac23]$, $r\geq 1$ and $t\geq 0$, it holds
 \[
   H_1(\alpha,r,t) \leq 
   H_1(\frac12,r,t) = 
   \frac{3 (4 r^2 + r (6 t + 2) + 3 t (t + 1))}{16 r^2 + 4 r + 9 t^2 + 9 t - 2}.
 \]
\end{myclaim}
Finally, we obtain our upper bound by maximizing the
above expression over the domain $r\geq1, t\geq 0$.
Let $u:=r-1\geq 0$. This allows us to rewrite
the previous upper bound as 
\[H_1(\frac12,r,t)=R_1(u,t):=
\frac{3 (6+3 t^2+10 u+4 u^2+ 9t+6 ut)}{18+9 t+9 t^2+36 u+16 u^2}.\]
for some nonnegative variables $u$ and $t$. Rather than relying on complicated differential calculus to maximize $R_1$, we give a short proof based on a certificate that some matrix is \emph{copositive} (see, e.g.~\cite{dur2010copositive}), that was found by solving a semidefinite program. Observe that $R_1(u,t)=\zb^T A \zb/\zb^T B \zb$, with
\[
\zb=\begin{pmatrix}
     u \\ t \\ 1
    \end{pmatrix},
\quad
 A = \begin{pmatrix}
        12 & 9 & 15\\
        9 & 9 & \nicefrac{27}{2}\\
        15 & \nicefrac{27}{2} & 18
       \end{pmatrix},
\quad \text{and}\quad
B = \begin{pmatrix}
        16 & 0 & 18\\
        0 & 9 & \nicefrac{9}{2}\\
        18 & \nicefrac{9}{2} & 18
       \end{pmatrix}.
\]
Let $\tau:=\frac{7+\sqrt{37}}{8}$,
$X:=\scalemath{0.7}{\begin{pmatrix} 0 & 0 & 3\\ 0 & 0 & \nicefrac{9}{8} \\ 3 & \nicefrac{9}{8} & 0 \end{pmatrix}}$.
The reader can verify that the matrix
\[
 Z=\tau B-A-X = \begin{pmatrix}
             2(1+\sqts) & -9 & \nicefrac{9}{4}\cdot(\sqts-1)\\
             -9         & \nicefrac{9}{8}\cdot(\sqts-1) & \nicefrac{9}{16}\cdot(\sqts-19)\\
             \nicefrac{9}{4}\cdot(\sqts-1) & \nicefrac{9}{16}\cdot(\sqts-19) &  \nicefrac{9}{4}\cdot(\sqts-1)
            \end{pmatrix}
\]
is positive semidefinite. As a result, $\zb^T (\tau B-A) \zb \geq \zb^T X \zb=6u+\nicefrac94\cdot t$ holds for all $u,t\in\R$, and this quantity is nonnegative because $u\geq 0$ and $t\geq 0$. This proves $\tau \cdot \zb^T B \zb - \zb^T A \zb\geq 0$, and thus $R_1(u,t)\leq \tau$.
We note that this bound is tight, as $R_1(u_i,t_i)\to\tau$ for sequences of integers $\{(t_i,u_i)\}_{i\in\mathbb{N}}$ such that $t_i/u_i$ converges to $\frac{2}{9}(1+\sqrt{37})$.
By Claims~\ref{C2}-\ref{C5}, we have thus shown that $R(\yb,\xb)\leq\max(\frac{13}{8},\frac{7+\sqrt{37}}{8})=\frac{7+\sqrt{37}}{8}$, which concludes this proof.
\cqfd
\end{proof}

\noindent\textbf{Acknowledgements.} The author thanks a group of colleagues, in particular Rico Raber and Martin Knaack, for discussions during a group retreat that contributed to this paper. 
The author also thanks anonymous referees, whose comments helped to improve the quality of this manuscript.


\newpage
\appendix

\section*{Appendix: Omitted proofs}

\paragraph{Proof of Claim~\ref{C0}.} 

We assume that the bin weights are $W_1\geq\ldots\geq W_p$, with $W_{p-1}\geq\frac23$. Denote by $S:=\sum_{k} W_k$ the sum of all item weights. Define further $x=\lfloor S \rfloor$ and $f=S-s\in[0,1)$. Clearly, $\OPT \geq \sum_{i=1}^x i + f\cdot (x+1)$, which corresponds to an ideal packing putting the whole weight as early as possible. Then, we write
\[
 \OPT \geq \sum_{i=1}^x i + f\cdot (x+1) = (x+1)\cdot (\frac{x}{2}+f)
 =\frac12 (x+f)(x+f+1)+\frac12 f(1-f),
\]
which implies $\OPT\geq S(S+1)/2$ because $f(1-f)\geq 0$. Then, we get an upper bound on $\WFFIR$ by observing that averaging $W_1,\ldots, W_{p-1}$ can only give a worse packing (as we transfer weight from early bins to later bins):
\[
 \WFFIR\leq \frac{S-W_p}{p-1}\cdot \sum_{i=1}^{p-1} i + p\cdot W_p = \frac{p}{2} (S+W_p).
\]
Furthermore, we know that $S\geq \frac23(p-1) + W_p$, which implies
\[\WFFIR \leq p(S-\frac13(p-1))\leq \frac{(3S+1)^2}{12},\]
where the last inequality is obtained by maximizing over $p$. Then, we obtain
\[
 \frac{\WFFIR}{\OPT}\leq \frac{(3S+1)^2}{6S(S+1)}=\frac32-\frac{3S-1}{6S(S+1)}.
\]
It is straightforward to see that the above bound is $\leq \frac32$ for all $S>1$, while the \WFFIR algorithm is clearly optimal whenever $S\leq 1$ (all items fit in a single bin).

\paragraph{Proof of Claim~\ref{C2}.} 
Let $\ell\in\mathbb{N}$ 
be such that the pseudo-packing defining $B(\xb,\yb)$ uses $\ell$ bins, i.e., there exists $\rho\in(0,1-x_{\ell}]$ such that 
\[\sum_{k=1}^r y_k=\sum_{j=1}^{\ell-1} (1-x_j) + \rho.\]

\enlargethispage{1em}
\sloppypar{
The proof is in two steps. First, we show an upper bound of the form 
\begin{equation}\label{intermediate}
R(\yb,\xb)\leq R([\beta,\ldots,\beta,\alpha],[\alpha,\ldots,\alpha]), 
\end{equation}
and then we will decrease $r-1$ first coordinates of $\yb$ from $\beta$ to $\frac23$. For the first step, we distinguish two cases. We are in the first case
if $\ell>s$ or $\rho\geq \frac13$. Then, we
define
$\beta=\nicefrac{1}{(r-1)}\cdot(y_1+\ldots+y_{r-1})$, 
$\ell_0=\min(\ell,s)$, $\alpha=$\mbox{$\nicefrac{1}{(l_0+1)}\cdot (x_0+\ldots+x_{\ell_0})$}, 
$\bar{\yb}=[\beta,\ldots,\beta,\alpha]\in\R^r$
and $\bar{\xb}=$\mbox{$[\alpha,\ldots,\alpha,x_{\ell_0+1},\ldots,x_s]$} $\in\R^s$, i.e.,
$\bar \yb$ and $\bar \xb$ are obtained by averaging $y_{1},\ldots,y_{r-1}$ and 
$x_{0},\ldots,x_{\ell_0}$, respectively (recall that $x_0=y_r$).
We claim that this averaging of coordinates increases the bound. Indeed, we have $A(\yb,\xb)\leq A(\bar\yb,\bar\xb)$ because $(\bar\yb,\bar\xb)$ is obtained by transferring weight from early bins to later bins. On the other hand, it can be seen that averaging these coordinates did not change the pseudo-packing defining $B$, so that
$B(\yb,\xb)= B(\bar\yb,\bar\xb)$.
This is always true 
when $\ell>s$, and otherwise it requires $\rho\geq \alpha-x_{\ell}$, which is satisfied thanks to $\alpha-x_{\ell}\leq \frac23-\frac13=\frac13$
and our assumption that $\rho\geq \frac13$. Thus, $R(\yb,\xb)\leq R(\bar  \yb,\bar \xb)$.
}

If $\ell>s$, we have already obtained a bound of the form~\eqref{intermediate}. Otherwise, we will use a sequential rounding procedure to round the last coordinates of $\bar \xb$ to either $\alpha$ or to $0$, the latter case being equivalent to deleting some coordinates of $\bar \xb$. Let $j$ be such that $\bar{x_1}\geq\ldots\geq \bar x_{s-j} > \bar x_{s-j+1} = \bar x_{s-j+2} = \ldots = \bar x_s$, i.e., there are $j$ equal coordinates at the end of $\bar \xb$, and denote by $\bar \xb(\delta)$ the vector obtained by adding $\delta\in\R$ to each of the $j$ last coordinates of $\bar \xb$. It is not hard to see that the functions $A(\bar \yb, \bar \xb(\delta))$ and $B(\bar \yb, \bar \xb(\delta))$ are both linear in the interval $[-\bar x_s, \bar x_{s-j}-\bar x_{s-j+1}]$. This implies that 
$R(\yb,\bar \xb(\delta)))$ must be a monotone rational function on this interval, and we obtain a larger bound by setting either $\delta=-\bar{x}_s$, which means that we delete the last $j$ coordinates of $\bar \xb$, or by setting $\delta=\bar{x}_{s-j}-\bar x_{s-j+1}$, in which case we have increased the number of equal coordinates at the end of $\bar\xb$. Repeating this procedure gives a bound of the desired form~\eqref{intermediate}.

In the second case, it is $\ell\leq s$ and $\rho<\frac13$. Then, we proceed roughly as above, but with a slight change. Indeed, now we need to define $\ell_0=\ell-1$ to ensure that averaging $x_0,\ldots,x_{\ell_0}$ does not change the pseudo-packing defining $B(\yb,\xb)$. Then in the last step of the sequential rounding procedure, when we round a group of coordinates $(\ell,\ldots,s')$ to either $0$ or $\alpha$, we must make sure that the number of bins $\ell$ required by the pseudo-packing does not change for all $\delta\in[-x_{\ell},\alpha-x_\ell]$, such that we ``remain in the same piece'' of the piecewise linear function $\delta \mapsto B(\bar \yb, \bar \xb(\delta))$. This is 
true as long as $x_{\ell}+\rho+\delta\leq 1$, which is
guaranteed by our assumption that $\rho<\frac13$, because  $x_\ell+\rho+\delta\leq x_\ell+\rho+\alpha-x_{\ell}=\rho+\alpha<\frac13+\frac23=1$ holds
for all $\rho\leq \alpha-x_{\ell}$.

It remains to prove the second step of this claim, in which we decrease $\beta$ to a value of $\frac{2}{3}$.  There is nothing to show when $r=1$, henceforth we assume $r>1$.
Denote by $\xb'=[\alpha,\ldots,\alpha]\in\R^t$ the vector obtained at the end of the first step of this proof, 
and let $\yb':=[\frac13,\ldots,\frac13,\alpha]\in\R^r$.
Further, define $\ell'\in\mathbb{N}$ as follows: if $\frac{2}{3}(r-1)+\alpha \leq (1-\alpha)\cdot t$, then $\ell'$ is the smallest integer such that 
$\frac{2}{3}(r-1)+\alpha \leq (1-\alpha)\cdot \ell'$; otherwise, it is the smallest integer such that $\frac{2}{3}(r-1)+\alpha \leq (1-\alpha)\cdot t+ (\ell'-t)$. It can be seen that $\ell'$ defines the number of bins with weight $>x_i'$ in the pseudo packing defining $B(\yb',\xb')$, i.e., $\ell'$ is the minimum number of bins required to pack the a total weight of $\sum_{i=1}^r y_i'=\frac23(r-1)+\alpha$ in the space remaining over bins with sizes $\xb'$. 

\enlargethispage{1em}
Next, we show that $\ell'\geq \frac{r}{2}$, by distinguishing two cases. If 
$\frac{2}{3}(r-1)+\alpha \leq (1-\alpha)\cdot t$, then \vspace{-0.5em}
\[\ell' \geq \frac{\frac{2}{3}(r-1)+\alpha}{1-\alpha}
\geq \frac{4}{3}(r-1)+1\geq \frac{r}{2},\vspace{-0.2em}
 \]
where the second inequality follows from $\alpha\geq \frac12$ and the last one is valid
for all $r\geq \frac25$. Otherwise, we have
\[
 \ell' \geq \frac23 (r-1)+\alpha(1+t)\geq \frac23 (r-1)+\frac12 \geq \frac{r}{2},
\]
where we used $\alpha\geq\frac12$ and $t\geq 0$ for the second inequality, and the last inequality is valid for all $r\geq 1$.

To conclude this proof, observe that we have
\[A(\yb',\xb')=A(\bar\yb,\xb')-(\beta-\frac23)\sum_{i=1}^{r-1} i= A(\bar\yb,\xb')-(\beta-\frac23) (r-1) \frac{r}{2}.\]
On the other hand, we have removed a total weight of $(\beta-\frac23)\cdot(r-1)$ in bins with index $\geq \ell'$ in the pseudo-packing defining $B(\yb,\xb')$, so
\[B(\yb',\xb')\leq B(\bar\yb,\xb')-(\beta-\frac23) (r-1) \ell'.\]
Now, $A(\bar \yb,\xb')\geq B(\bar \yb,\xb')$ and $\ell'\geq \frac{r}{2}$ imply 
$A(\bar\yb,\xb') \ell' (r-1)(\beta-\frac{2}{3}) \geq B(\bar \yb,\xb') \frac{r}{2} (r-1)(\beta-\frac{2}{3})$. Finally, subtracting $A(\bar \yb,\xb')\cdot B(\bar \yb,\xb')$ from both sides and dividing by $B(\bar\yb,\xb')\cdot ((\beta-\frac23) (r-1) \ell'-B(\bar\yb,\xb'))<0$ yields the desired result:
\[
 R(\bar \yb,\xb')=\frac{A(\bar \yb,\xb')}{B(\bar \yb,\xb')}
                 \leq \frac{A(\bar\yb,\xb')-(\beta-\frac23) (r-1) \frac{r}{2}}{B(\bar\yb,\xb')-(\beta-\frac23) (r-1) \ell'}
                 \leq \frac{A(\yb',\xb')}{B(\yb',\xb')}=R(\yb',\xb').
\]

\paragraph{Proof of Claim~\ref{C3}.}
Let $\yb':=[\frac23,\ldots,\frac23,\alpha]\in\R^r$ and 
 $\xb':=[\alpha,\ldots,\alpha]\in\R^t$. Then,
 we have
 \[A(\yb',\xb')=\frac13 r(r-1)+\frac{\alpha}{2}(t+1)(2r+t).\]
If $\sum_i y_i' = \frac23(r-1)+\alpha\leq(1-\alpha)t$, then we have
 \[B(\yb',\xb')=\frac{\ell(\ell-1)}{2}+(\alpha+\rho)\ell+\frac{\alpha}{2}(t-\ell)(t+\ell+1),\]
 where $\ell\in\mathbb{N}$ and $\rho\in(0,1-\alpha]$ are such that $\frac{2}{3}(r-1)+\alpha = \ell (1-\alpha)+\rho$.
 Let $\tilde \ell:=\frac{\nicefrac{2}{3}(r-1)+\alpha}{1-\alpha}$, so that $\frac{2}{3}(r-1)+\alpha=\tilde \ell (1-\alpha)$, and define 
 \[B'=\frac{\tilde\ell(\tilde\ell+1)}{2} + \frac{\alpha}{2}(t-\tilde\ell)(t+\tilde\ell+1).\]
 Substituting $\tilde \ell$ for $(\ell-1)+\frac{\rho}{1-\alpha}$ in $B'$, we arrive at
 $B(\yb',\xb')-B'=\frac{\rho(1-(\alpha+\rho))}{2(1-\alpha)}\geq 0$ after some simplifications. Thus, we have 
 \begin{align*}
R(\yb',\xb')\leq \frac{A(\yb',\xb')}{B'}
  &=\frac{\frac13 r(r-1)+\frac{\alpha}{2}(t+1)(2r+t)}{\frac{(2r+1)(3\alpha+2r-2)}{18(1-\alpha)}+\frac{\alpha}{2}t(t+1)}\\
  &=\frac{3(1-\alpha)(3\alpha(t+1)(2r+t)+2r(r-1))}{3\alpha+4r^2+(6\alpha-2)r+9(1-\alpha)\alpha t(t+1)-2}.
 \end{align*}
 
In the other case, it is 
$\sum_i y_i' = \frac23(r-1)+\alpha > (1-\alpha)t$ and we proceed in a similar manner: Let $\ell\in\mathbb{N}$, $\rho\in(0,1]$ and $\tilde \ell\in\R$ be such that 
$\frac23(r-1)+\alpha = (1-\alpha)t+(\ell-t-1)+\rho=(1-\alpha)t+(\tilde \ell-t)$, so that
\[B(\yb',\xb')=\frac{\ell(\ell-1)}{2}+\rho\ell,\]
and define $B'':=\tilde \ell (\tilde\ell+1)/2$. Using $\tilde\ell=\ell-(1-\rho)$, we obtain $B(\yb',\xb')-B''=\frac{\rho}{2}(1-\rho)\geq 0$. Finally, we substitute $\tilde\ell$ for $\frac23(r-1)+\alpha(1+t)$ in $B''$, which gives
 \begin{align*}
R(\yb',\xb')\leq \frac{A(\yb',\xb')}{B''}
  &=\frac{\frac13 r(r-1)+\frac{\alpha}{2}(t+1)(2r+t)}{\frac{1}{18} (2 r + 3 \alpha (t + 1) - 2) (2 r + 3 \alpha (t + 1) + 1)}\\
  &=\frac{6 r(r-1)+9\alpha(t+1)(2r+t)}{(2 r + 3 \alpha (t + 1) - 2) (2 r + 3 \alpha (t + 1) + 1)}.
 \end{align*}

\paragraph{Proof of Claim~\ref{C4}.} 
Straightforward calculus shows that
\begin{align*}
 \frac{dH_2(\alpha,r,t)}{d\alpha} = \frac{-9(t+1)}{h_2(\alpha,r,t)^2} &\Big( r^2 \big[6(2\alpha-1)+4t(3\alpha-1)\big] + \big[t(2+9\alpha^2(1+t)^2)\big] \\
 &\!\ + r\big[18\alpha^2t^2 + 2(18\alpha^2-6\alpha+1) +2(9\alpha^2-6\alpha+3)\big]\Big).
\end{align*}
This expression is nonnegative for all $t\geq 0,r\geq 0$ and $\alpha\geq\frac{1}{2}$, because $2\alpha-1\geq 0$ for $\alpha\geq \frac{1}{2}$ and the two quadratics $18\alpha^2-6\alpha+1$ and $9\alpha^2-6\alpha+3$
are always nonnegative.
Therefore, $H_2(\alpha,r,t)\leq H_2(\frac12,r,t)
=\frac{6 (4 r^2+3 t (1+t)+r (2+6 t))}{(-1+4 r+3 t) (5+4 r+3 t)}$. 
Next, we compute
\[
 \frac{dH_2(\frac12,r,t)}{dt}
 =\frac{18}{h_3(r,t)^2} \Big(t^2[3+6r]+2t[4r^2+10r-5]+[16r^2-2r-5])\Big),
\]
where $h_3(r,t)$ is the denominator of $H_2(\nicefrac{1}{2},r,t)$. This expression is nonnegative for all $t\geq 0$ and $r\geq 1$, because $4r^2+10r-5\geq 0$ and $16r^2-2r-5\geq 0$ hold for all $r\geq 1$. The function $t\mapsto H_2(\nicefrac{1}{2},r,t)$ is thus nondecreasing
in the domain $t\in[0,\nicefrac43 (r-1)+1]$, which implies
\[
 H_2(\alpha,r,t)\leq
 H_2(\frac12,r,\frac43 (r-1)+1)
 = \frac{26 r^2+2r-1}{16r^2+4r-2} 
\]
for all $\alpha\geq\frac{1}{2}$ and 
$t\in[0,\nicefrac43 (r-1)+1]$. Finally, it is straightforward to verify that this upper bound is nondecreasing with respect to $r$ for $r\geq 1$, so we obtain the upper bound of $\frac{13}{8}$ by letting $r\to\infty$.

\paragraph{Proof of Claim~\ref{C5}.} 
Standard algebraic manipulations show that 
\begin{align*}
 \frac{dH_1(\alpha,r,t)}{d\alpha} = 
 \frac{-3}{h(\alpha,r,t)^2} 
  & \Big( 8 r^4 + (t + 1) \big[24 r^3 (2 \alpha - 1) + 3 t (3 \alpha^2 - 4 \alpha + 2)\big] \\
 & \quad + 6 r^2 \big[(t^2 + 3 t) (3 \alpha^2 - 2 \alpha + 1) + 6 \alpha^2 - 4 \alpha + 1\big]\\
 & \quad + 2 r \big[6 t^2 (2 \alpha - 1) + 9 t \alpha^2 + 9 \alpha^2 - 12 \alpha + 5\big]\Big),
 \end{align*}
 where $h_1(\alpha,r,t)$ is the denominator of $H_1(\alpha,r,t)$. It is easy to see that the above expression is nonpositive for all  $r\geq 0, t\geq 0$ and $\alpha\geq\frac{1}{2}$, because 
 $2\alpha-1\geq 0$ for $\alpha\geq\frac{1}{2}$ and
 the quadratic expressions $3\alpha^2-4\alpha+2$,
 $3\alpha^2-2\alpha+1$, $6\alpha^2-4\alpha+1$ and 
 $9\alpha^2-12\alpha+5$ are always nonnegative.
 Therefore, $\alpha\mapsto H_1(\alpha,r,t)$ is nonincreasing on the interval $[\frac12,\frac23]$ and the result follows by substituting $\frac12$ for $\alpha$.
 
\end{document}